\newcommand{\Rmnum}[1]{\expandafter\@slowromancap\romannumeral #1@}
\newtheorem{theorem}{Theorem}
\newtheorem{lemma}{Lemma}
\newtheorem{proposition}{Proposition}
\newtheorem{corollary}{Corollary}
\begin{document}
%
\title{Fast Deployment of UAV Networks for Optimal Wireless Coverage}
\author{\IEEEauthorblockN{Xiao~Zhang~and~Lingjie Duan}
\thanks{

X. Zhang and L. Duan are with Engineering Systems and Design Pillar, Singapore University of Technology and Design, Singapore.
E-mail: \{zhang\_xiao, lingjie\_duan\}@sutd.edu.sg.
} }

\maketitle


\begin{abstract}
Unmanned Aerial Vehicle (UAV) networks have emerged as a promising technique to rapidly provide wireless coverage to a geographical area, where a flying UAV can be fast deployed to serve as cell site. Existing work on UAV-enabled wireless networks overlook the fast UAV deployment for wireless coverage, and such deployment problems have only been studied recently in sensor networks. Unlike sensors, UAVs should be deployed to the air and they are generally different in flying speed, operating altitude and wireless coverage radius. By considering such UAV heterogeneity to cover the whole target area, this paper studies two fast UAV deployment problems: one is to minimize the maximum deployment delay among all UAVs (min-max) for fairness consideration, and the other is to minimize the total deployment delay (min-sum) for efficiency consideration. We prove both min-max and min-sum problems are NP-complete in general. When dispatching UAVs from the same location, we present an optimal algorithm of low computational complexity $O(n^2)$ for the min-max problem. When UAVs are dispatched from different locations, we propose to preserve their location order during deployment and successfully design a fully polynomial time approximation scheme (FPTAS) of computation complexity $O(n^2 \log \frac{1}{\epsilon})$ to arbitrarily approach the global optimum with relative error $\epsilon$. The min-sum problem is more challenging. When UAVs are dispatched from the same initial location, we present an approximation algorithm of linear time. As for the general case, we further reformulate it as a dynamic program and propose a pseudo polynomial-time algorithm to solve it optimally.
\end{abstract}

\begin{IEEEkeywords}
Unmanned Aerial Vehicle Networks, Wireless Coverage, Fast Deployment, Approximation Algorithm.
\end{IEEEkeywords}

\IEEEpeerreviewmaketitle

%

\vspace{3cm}

\IEEEraisesectionheading{\section{Introduction}\label{sec:intr}}
Recent years have witnessed increasingly more exercises and uses of Unmanned Aerial Vehicle (UAV) networks for rapidly providing wireless coverage \cite{zeng2016wireless}. This flying cell site technology enabled by UAV rapidly expands the wireless coverage of the static territorial base stations on the ground, where UAVs serve as flying base stations to serve a geographical area (e.g., a disaster zone) out of the reach of the cellular networks. For example, Verizon has developed airborne LTE service allowing communications between a UAV and hurricane disaster victims \cite{ref:Verizon}. Moreover, Project Loon \cite{ref:loon} uses balloons as flying base stations to provide high speed internet coverage to people in rural and remote areas worldwide. In addition, traditional base stations or access points \cite{liu2012traffic} are deployed at fixed locations for a long term by meeting the average traffic load, while flying UAVs are mobile and do not have such constraint to meet varying traffic load \cite{zhang2017spectrum}. Thanks to such advantage, wireless carriers such as AT\&T have started to use UAVs to opportunistically boost wireless coverage for crowds in big concerts or sports, where people continuously post their selfies and videos online \cite{ref:att}.

There is increasingly more research work to deploy UAVs for providing wireless coverage (e.g., \cite{zeng2016wireless, al2014optimal, mozaffari2015drone, zeng2017energy}). For example, \cite{al2014optimal} and \cite{mozaffari2015drone} {consider the scenario that the wireless communication channels between UAVs and ground users are dominated by both line-of-sight (LoS) and Non-line-of-sight (NLOS) links.} They investigate the optimal operating altitude for a single UAV, where a larger UAV height increases the line-of-sight opportunity of air-to-ground transmission but incurs a larger path loss. In a UAV-enabled wireless network, \cite{zeng2017energy} {adopts the LoS dominated communication and} studies the tradeoff between a UAV’s energy consumption and communication throughput by optimizing the UAV’s moving trajectory. However, the existing work on UAV networks overlook the fast UAV deployment problems to rapidly provide the wireless coverage. Only some recent work about sensor networks study the deployment problems (e.g., \cite{wang2015minimizing}). Such results cannot apply to our fast UAV deployment problems. Unlike sensors or traditionally vehicular networks \cite{pang2014efficient} \cite{pan2012cooperative}, UAVs should be deployed to the air and the optimal deployment should take into account their heterogeneity in flying speed, operating altitude and wireless coverage radius.


Given the aforementioned limitations, we advance the research on fast deployment of heterogeneous UAVs. In practice, UAVs have limited coverage radii and can only serve users closely. Before servicing its associated users, each UAV takes the travel time or deployment delay to reach its final position and the delay depends on the travel distance to its final operational position, flying speed and operating altitude. As reported in \cite{bedfordunmanned}, different types of UAV have different mission altitudes, radii, flying speeds and endurance. For example, \emph{Micro} UAV's altitude is generally smaller than 400 feet, flying speed is from 10 to 25 miles/hour, radius is from 1 to 5 miles and endurance is about 1 hour. {By considering such UAV heterogeneity and focusing on the LoS dominated communication scenario to cover the whole target area}, we comprehensively study two fast deployment problems: one is to minimize the maximum deployment delay among all UAVs for fairness consideration and the other is to minimize the total deployment delay for efficiency consideration. The min-max optimization problem arises naturally in situations of disasters or battle fields when we fairly care about the service delivery delay to any potential wireless user in the target region. A disaster victim or soldier may appear in any location of the target region and the min-max problem targets at minimizing the worse-case delay performance of any user. We want to avoid the unfair deployment outcome that some users are served shortly while some others start services after a long time.

Different from the min-max optimization problem, the min-sum problem targets at minimizing the sum of all UAVs' travel time, or equivalently, the average travel time to attain a full coverage of the target region. This efficiency problem arises naturally in a situation when we aim to service many users in a big concert or sport and care the average waiting time performance of the crowd rather than the performance difference between individuals. Minimizing the total delay helps to improve the average service quality. Both the min-max and min-sum problems are important for different scenarios and they are conflicting in nature. On one hand, minimizing the maximum deployment delay may imply a significant increase in the total deployment delay. On the other hand, minimizing the total deployment delay may imply a significant increase in maximum deployment delay, since it does not consider reducing deployment delays of all UAVs in a fair manner. Designing different algorithms for both problems is well-motivated.



{\em Our key novelty and main contributions are summarized as follows.}
\begin{itemize}
\item \emph{Novel UAV fast deployment for Wireless Coverage (Section~\ref{sec:sys}):} To our best knowledge, this is the first paper to study heterogenous UAV deployment for providing emergent wireless coverage to a target geographical area. We prove that the both problems with objectives of min-max and min-sum deployment delay are NP-complete in general.
{\item \emph{Minimizing maximum UAV deployment delay (Section~\ref{sec:max}):}
    \begin{itemize}
    \item When a number $n$ of diverse UAVs are dispatched from the same initial location (e.g., the closest UAV station) to the target area, we present an optimal deployment algorithm of low computational complexity $O(n^2)$ by balancing UAVs' diverse flying speeds and coverage radii.
    \item When UAVs are generally dispatched from different locations, we propose to preserve their location order during deployment and successfully design a fully polynomial time approximation scheme (FPTAS) of computation complexity $O(n^2 \log \frac{1}{\epsilon})$ to arbitrarily approach the global optimum with relative error $\epsilon$.
    \end{itemize}
\item \emph{Minimizing total UAV deployment delay (Section~\ref{sec:sum}):}
    \begin{itemize}
    \item When UAVs are dispatched from the same initial location, we present a linear time approximation algorithm with provable performance bound.
    \item When dispatching UAVs from different locations, we further reformulate the min-sum problem as a dynamic program and propose a pseudo polynomial-time algorithm to solve it optimally.
    \end{itemize}}
\end{itemize}
\section{Related Work}

The use of UAVs as flying base stations is attracting growing interests from researchers \cite{mozaffari2017mobile} \cite{zhang2017optimization} \cite{xuuav}. The literature on UAV-enabled communications focus on developing the air-to-ground transmission model and explore the line of sight opportunity \cite{al2014optimal} \cite{mozaffari2015drone}. {Further, Azari et al. \cite{azari2017coverage} consider the co-channel interference effect and study the UAV coverage maximization problem.}

With respect to the UAV network deployment, most of existing works investigate the deployment or movement schemes of UAVs for reducing transmit power consumption \cite{li2016energy} \cite{wu2017joint} or the propulsion energy consumption \cite{zeng2017energy} \cite{di2015energy}. Specifically, in \cite{li2016energy}, Li et al. present a UAV energy-efficient relaying system to guarantee the success rate such that the lifetime is maximized. In this system, the transmission schedule of the UAVs is optimized to reduce the maximum energy consumption of the UAVs, thereby extending its lifetime. In \cite{wu2017joint}, Wu et al. use UAVs as flying base stations to serve a group of users fairly for transmission throughput. They optimize the multiuser communication scheduling jointly with the UAVs’ trajectory and power control. In \cite{zeng2017energy}, Zeng and Zhang present a UAV propulsion energy consumption model and optimize the UAVs' coverage radii and flying speeds to maximize the energy efficiency for communication. By deriving an energy consumption model from real measurements \cite{di2015energy}, Carmelo and Giorgio optimize the UAV path to minimize the energy consumption such that all points of a specific area is covered. In \cite{orfanus2016self}, Orfanus et al. use multiple UAVs as relay nodes in the self-organizing paradigm to support military operations. \cite{alzenad20173d} uses the UAV as base station to provide wireless serivce to low-mobility ground users with QoS requirements, in which they aim to maximize the number of covered users. \cite{mozaffari2016efficient} derives the wireless coverage probability for UAVs as a function of the operating altitude and the antenna gain. Then, it presents a deployment scheme to maximize the coverage performance with minimum transmit power. Few of the existing work study the fast UAV deployment for providing wireless coverage.

Related to the fast UAV deployment, there are only a few recent theoretical works on sensor networks (e.g., \cite{wang2015minimizing} \cite{benkoczi2016}). These work focus on minimizing the sensors' maximum or total moving/deployment distance in the one-dimensional ground. Wang and Zhang \cite{wang2015minimizing} assume an identical sensing range for all sensors and present the first exact algorithm to compute the maximum weighted movement of sensors, which has the computation complexity $O(n^2 \log n \log \log n)$. To deal with the more general case of diverse sensing ranges and even weights for sensors, Benkoczi et al. \cite{benkoczi2016} strongly assume all sensors are on one end of the target interval and thus present an approximation algorithm to minimize the total weighted movement. However, the above algorithm design methods about sensor networks cannot apply to our fast UAV deployment problems, where UAVs should be deployed to the air and the optimal deployment should take into account their heterogeneity in flying speed, operating altitude and wireless coverage radius. Regarding to the short delay wireless service by UAVs, Mohammad et al. \cite{mozaffari2016unmanned} consider a system of UAV with underlaid Device-to-Device communications and study the tradeoff between the coverage and delay. None of the existing work study the fast UAV deployment for providing full wireless coverage over a target area.

\section{System Model and Problem Formulation}\label{sec:sys}

\begin{figure}[t]
    \centering
        \includegraphics[width=0.75\textwidth]{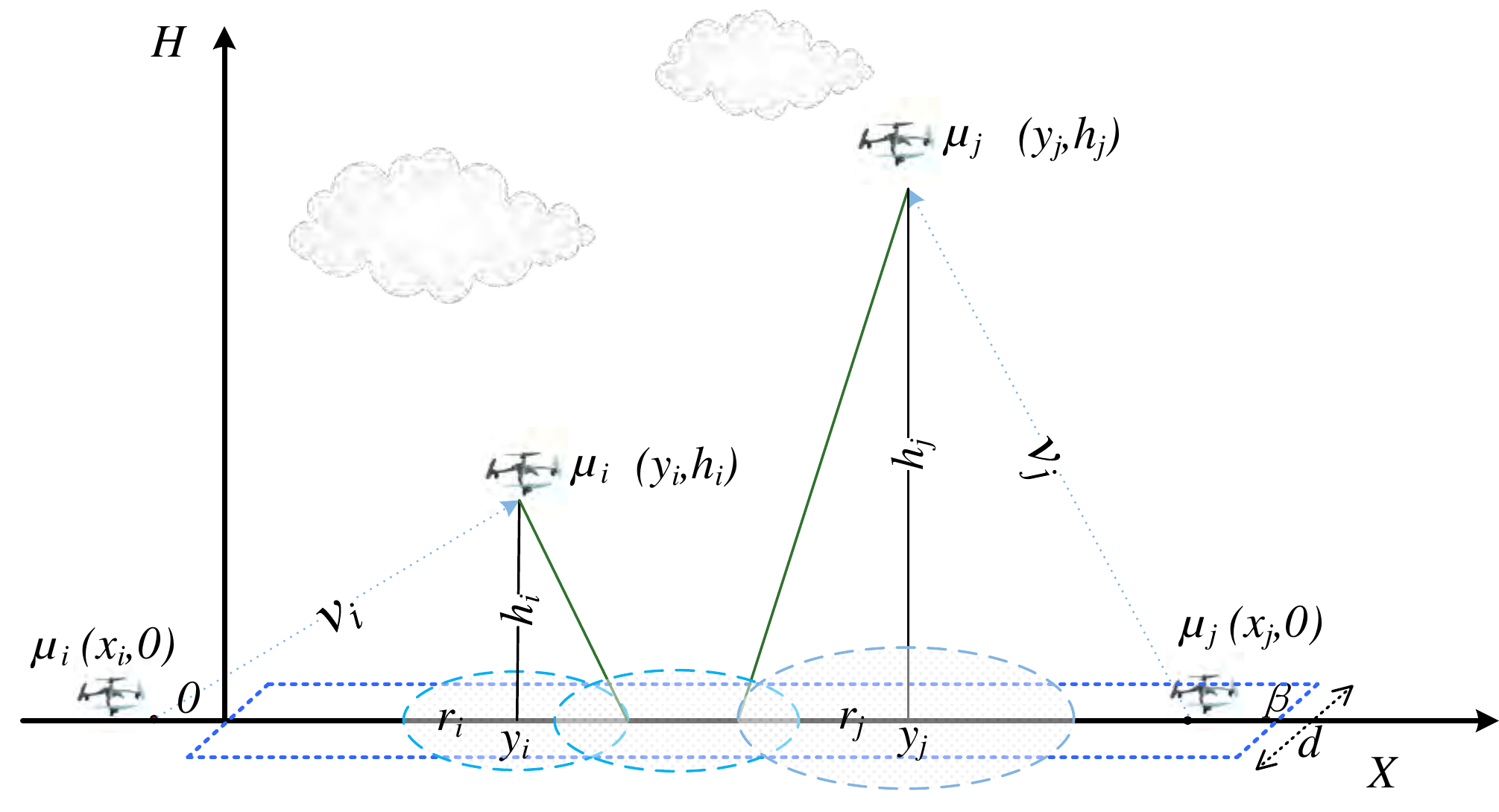}
    \caption{System model for deploying UAVs to provide wireless coverage to the target area \textbf{A} ($\beta, d$), where UAV $\mu_i$ with coverage radius $r_i$ is deployed from $x_i$ initially to $y_i \in [0, \beta]$ at operating altitude $h_i$.}
    \label{fig:covermodel}
\end{figure}
%

This section introduces our system model and problem formulation for deploying UAVs to provide wireless coverage to the whole target.

As shown in Figure~\ref{fig:covermodel}, a centralized system needs to emergently deploy UAVs to provide full wireless coverage over the target area \textbf{A}, which is a rectangle with length $\beta$ and width $d$ as in Equation (\ref{eq:A}). Restricting the target area to be a thin rectangular area is reasonable to mimic avenues, streets and highways. The notations and corresponding meanings are given in Table~\ref{tab:notation}. The UAVs in a set $\textbf{U} = \{\mu_1, \cdots, \mu_n\}$ are initially located in different locations $\{x_1, \cdots, x_n\}$ along $x$-axis (e.g., in ground UAV stations) before the deployment. Without loss of generality, we assume $x_1 \leq \cdots \leq x_n$. We denote a UAV $\mu_i$'s final position after deployment as $(y_i, h_i)$ at operating altitude $h_i$. The UAV $\mu_i$ flies from its initial location $(x_i, 0)$ to its designed destination $(y_i, h_i)$ and then hovers at the operating altitude $h_i$ to serve the ground users.

{
\begin{equation}
\label{eq:A}
\textbf{A} = \{ (w,l)| 0 \leq w \leq d, 0 \leq l \leq \beta \}.
\end{equation}
}

\begin{table}[t]%
\centering
\caption{Notations and their physical meanings.}
\label{tab:notation}%
\begin{tabular}{|l|l|}
\hline
Notation                   & Meaning\\\hline
$n$                         & Number of UAVs\\\hline
$\mu_{i}$                   & Index of UAV $i$\\\hline
$\beta$                     & Rightmost endpoint of the interval $L =[0, \beta]$\\\hline
$x_{i}$                     & Initial location of UAV $i$ before deployment \\\hline
$y_{i}$                     & Final location of UAV $i$ after deployment \\\hline
$r_{i}$                     & Coverage radius of UAV $i$\\\hline
$h_{i}$                     & Operating altitude of UAV $i$\\\hline
$v_{i}$                     & Flying speed of UAV $i$\\\hline
$T$                         & Maximum deployment delay obtained by Algorithm~\ref{alg:fptas}     \\\hline
$T^*$                       & Minimum maximum deployment delay of problem (\ref{IOP})                     \\\hline
$\Gamma^{\prime}$           & Total deployment delay obtained by Algorithm~\ref{alg:fptas}       \\\hline
$\Gamma^*$                  & Minimum total deployment delay $\Gamma^{*} = \sum_{1 \leq i \leq n} T_i^*$  \\\hline
$\epsilon$                  & The relative error in Algorithm~\ref{alg:fptas}                             \\\hline
\end{tabular}
\end{table}

We assume that the UAVs have sufficient bandwidth resources so that all UAVs can be assigned orthogonal channels and for avoiding interference-free. This interference-free model is widely used (e.g. \cite{mozaffari2017mobile} \cite{zhan2011wireless}). In practice the assigned channels for distant UAVs can be reused during deployment. Thus, we assume the interference among UAVs can be ignored, and henceforth we focus our study on dealing with the UAV coverage issue.

As in \cite{zeng2017energy}, we adopt the air-to-ground model where the wireless communication channels between UAVs and ground users in the target area are dominated by LoS links. LoS links are expected for air-to-ground channels in many scenarios [1]. Therefore, the channel power gain from the UAV to each user $k$ is modeled as the free-space path loss model, i.e., $g_k = \xi \bar{d_k}^{-2}$, where $\xi$ denotes the channel power gain at a reference distance. $\bar{d_k}$ is the link distance between the UAV and ground user $k$. Given a standard transmission power $P_i$, the signal-to-noise ration (SNR) at ground user $k$ is given by $\gamma_k = \frac{P_i g_k}{\sigma^2}$, where $\sigma^2$ denotes the noise power at each ground user. We say a ground user $k$ is covered by a UAV if the SNR at user $k$ is no less than a threshold value $\gamma_{th}$, which indicates the target data rate of each user is at least $1+\log (1+ \gamma_{th})$. Thus, we can obtain the correlation between a UAV's wireless coverage range $r$ (i.e., the maximum ground range for just achieving the threshold SNR $\gamma_{th}$) and operating altitude $h$ as $\frac{P_i \xi}{\sigma^2 (r^2 + h^2)} = \gamma_{th}$ \footnote{Such correlations as in \cite{zeng2017energy} \cite{azari2017ultra} will not affect our deployment algorithms design later, in which we consider a more general case that UAVs have arbitrary parameters of operating altitude, flying speed and coverage radius.}. We have $r_i = \sqrt{\frac{P_i \xi}{\gamma_{th} \sigma^2} - h_i^2}$ for UAV $\mu_i$.

A particular UAV $\mu_i$ operating at final position $(y_i, h_i)$ covers a region $D_i$ in the target rectangular, which is defined in Equation (\ref{eq:area}).

{
\setlength{\abovedisplayskip}{3pt}
\setlength{\belowdisplayskip}{-1pt}
\begin{equation}
\label{eq:area}
\begin{split}
D_i =& \{(w,l)|-\frac{d}{2} \leq w \leq \frac{d}{2}, \\
&y_i-\sqrt{r_i^2 - (\frac{d}{2})^2} \leq l \leq y_i+\sqrt{r_i^2 - (\frac{d}{2})^2}\}.
\end{split}
\end{equation}}

We require a full coverage over the target target area \textbf{A} ($\beta, d$) by deploying $n$ diverse UAVs, i.e., $\textbf{A} (\beta, d) \subseteq  \bigcup_1^n D_i$.




During the deployment, UAV $\mu_i$ travels an Euclidean distance $\sqrt{(y_i-x_i)^2+h_i^2}$ at constant flying speed $v_i$ as in \cite{hencheyflight}. Thus, its travel time is given by\footnote{This model is flexible to keep generality. Besides using the Euclidean distance, it can also be extended to (weighted) Manhattan distance $|x_i-y_i|+h_i$.}
{
\begin{equation}
\label{eq:time}
{T}_i(y_i) = \frac{\sqrt{(y_i - x_i)^2 + h_i^2}}{v_i}.
\end{equation}}

After considering all UAVs' travel time, we define the maximum deployment delay as the maximum travel time among all UAVs till reaching the full coverage over the target area \textbf{A}. Our maximum deployment delay optimization problem is thus
{
\begin{align}
\label{general}
&~ \min_{\{y_1,\cdots, y_n\}} \underset{1 \leq i \leq n} {\max} ~ {T}_i(y_i) ~ , \\
&~ {\rm s.t.}, \ \textbf{A} \subseteq  \bigcup_1^n D_i.\notag
\end{align}}

\noindent Note that the min-max objective is to balance the deployment time among all UAVs and fairly optimize the delay bottleneck for reaching the full coverage of the target.

In addition, we further consider the total deployment delay objective as the summation of travel times of all UAVs till reaching the full coverage over the target interval\footnote{{Although the energy efficiency is not considered explicitly here, our deployment delay objective does not conflict with energy efficiency concern. For example, in our min-sum delay optimization problem, we reduce the total travel time for all UAVs such that the target is fully covered. After this deployment, more UAVs' residual energy for hovering and serving ground users will be saved.}}. Under the efficiency consideration, our total deployment delay optimization problem is thus
{
\begin{align}
\label{sum}
&~ \min_{\{y_1,\cdots, y_n\}} \underset{1 \leq i \leq n} {\sum} ~ {T}_i(y_i) ~ , \\
&~ {\rm s.t.}, \ \textbf{A} \subseteq  \bigcup_1^n D_i.\notag
\end{align}}

We assume $2 \sum_{i=1}^{n}{\sqrt{r_i^2 - (\frac{d}{2})^2}} \geq \beta$ and $r_i \geq \frac{d}{2}$ throughout the paper. Otherwise, there is no feasible deployment to problems (\ref{general}) and (\ref{sum}). Since the width $d$ of the rectangular area \textbf{A} is a given constant value, we can restrict $d \to 0$ to facilitate the theoretical analysis in the following. Thus, the target region becomes a line interval $L = [0, \beta]$ as in \cite{Fan2014a} \cite{lyu2016cyclical}. We show later in Sections~\ref{sec:max} and~\ref{sec:sum} that our problems on this line interval are already NP-complete, which can shed light on the solutions to a rectangular target area \textbf{A}. Note that the minimum total deployment delay in problem (\ref{general}) is not smaller than that in problem (\ref{sum}), while the maximum deployment delay in problem (\ref{sum}) is not smaller than that in problem (\ref{general}).

Based on above problem formulations, the two fast UAV deployment problems belong to the domain of combinatorial optimization. The optimal UAV deployment is a specific combination of ordered UAVs, which is generally exponential in the number of UAVs. For a combinatorial optimization problem, theoretical insights of tractability and algorithmic results are the main concerns for problem solution. Though we consider the simplest possible line interval for target area as in \cite{lyu2016cyclical}, the fast UAV deployment problems (by considering different coverage radii $r_i$'s, operating altitudes $h_i$'s and flying speeds $v_i$'s are beyond prior deployment literature's methods for homogeneous sensor networks (i.e., \cite{lee2017minimizing} \cite{Fan2014a}). As we will show later in Sections~\ref{sec:max} and~\ref{sec:sum}, both problems (\ref{general}) and (\ref{sum}) are actually NP-complete. They are difficult to solve due to the UAVs' distinct initial locations and their multi-dimensional heterogeneity, which result in exponential number of sequences and combinations of UAVs.


%
%

\section{Optimization of min-max deployment problem} \label{sec:max}

In this section, we investigate how to dispatch a number $n$ of UAVs in a fair manner by targeting at the deployment delay to a user location in the worst case. Our min-max problem in~(\ref{general}) aims to minimize the maximum deployment delay among all UAVs such that any possible user located in the target region \textbf{A} ($\beta, d$) is treated fairly.


In the following, we first show that the problem (\ref{general}) when UAVs are dispatched from different locations (i.e., $x_i \neq x_j$) is NP-complete by reduction from the classic {\em 3-partition problem} \cite{garey2002computers}.

\begin{theorem} \label{thry:np1}
The min-max deployment delay problem in (\ref{general}) is NP-complete.
\end{theorem}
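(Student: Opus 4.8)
The plan is to first verify membership in NP and then establish NP-hardness by a polynomial-time reduction from the \emph{3-partition} problem \cite{garey2002computers}, which is strongly NP-complete and therefore admits reductions in which all constructed quantities are polynomially bounded. For membership, observe that a candidate deployment $\{y_1,\dots,y_n\}$ can be verified in polynomial time: one checks the covering constraint $[0,\beta]\subseteq\bigcup_i D_i$ by sorting the covered subintervals and scanning for gaps, and one evaluates $\max_i T_i(y_i)$ directly from (\ref{eq:time}). Since one may restrict attention to deployments in which the UAVs are packed tightly from left to right, so that each $y_i$ is an endpoint determined by the covering order, a polynomial-size certificate exists despite the positions being continuous.

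For the hardness direction, given a 3-partition instance consisting of positive integers $a_1,\dots,a_{3m}$ with $\sum_j a_j=mB$ and $B/4<a_j<B/2$, I would build a min-max instance on the line $L=[0,\beta]$. Conceptually I partition $L$ into $m$ consecutive \emph{bins} of width $B$ separated by $m-1$ \emph{gaps} of a fixed width $g$, so that $\beta=mB+(m-1)g$. I introduce $3m$ \emph{item} UAVs, letting UAV $j$ have coverage radius $r_j=a_j/2$ (covering an interval of length $a_j$ in the $d\to 0$ limit) and giving it a large flying speed together with a central initial location, so that it can reach any point of $L$ within a chosen delay threshold $\tau$. I also introduce $m-1$ \emph{separator} UAVs, placing separator $k$ at the center of gap $k$ with coverage radius $g/2$ and an arbitrarily small speed, so that within time $\tau$ it is effectively immobile and can only cover its own gap. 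The standing feasibility assumption holds with equality, since the total coverage length equals $mB+(m-1)g=\beta$.

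I then argue equivalence of the YES instances. If the 3-partition instance is feasible, I assign each triple summing to $B$ to one bin, tile that bin by the three corresponding items, and leave each separator on its gap; every UAV then has delay at most $\tau$ and the target is fully covered. Conversely, suppose a deployment attains full coverage with $\max_i T_i(y_i)\le\tau$. Because the total coverage length exactly equals $\beta$, full coverage forces a perfect tiling with no overlaps and no coverage spent outside $L$; the near-immobility of the separators pins each of them onto its gap, so the items must tile precisely the $m$ bins. Within any bin of width $B$ the item lengths sum to exactly $B$, and since each $a_j\in(B/4,B/2)$ an elementary counting argument forces exactly three items per bin. This recovers a valid 3-partition, completing the reduction. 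The quantities $B,g,\tau$ and all radii and speeds are polynomially bounded, so the construction runs in polynomial time.

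The main obstacle I anticipate is engineering the separator gadget so that the min-max (travel-time) constraint genuinely enforces the bin structure: I must choose the separator speed and $\tau$ so that no separator can migrate into a bin and no item can leak across a gap boundary without violating either the delay bound or the tight coverage budget, guaranteeing that a delay-$\tau$ full covering exists \emph{only} through a tiling that mirrors a 3-partition. A secondary technicality is substantiating the NP-membership certificate for a problem with continuous positions, which I resolve by showing that an optimal deployment can be taken with a discrete combinatorial structure, namely a left-to-right covering order, whose description has polynomial size.
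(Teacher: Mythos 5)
Your proposal follows essentially the same route as the paper's proof: a reduction from 3-partition in which item UAVs of radius $a_j/2$ must tile $m$ bins of width $B$, separator UAVs are pinned to the gaps by the delay bound, and a total coverage budget that exactly equals $\beta$ forces a perfect tiling and hence (via $B/4<a_j<B/2$) exactly three items per bin. The one obstacle you flag---making the separators genuinely immobile---is resolved in the paper not by an arbitrarily small speed but by giving each separator operating altitude $h_i=1$ and speed $v_i=1/K$ under deadline $K$, so that flying straight up already consumes the entire budget and any horizontal displacement $x$ would cost $\sqrt{x^2+1}\cdot K>K$; this pins the separators exactly rather than approximately and avoids the perturbation analysis your gadget would require.
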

\begin{proof}
{See Appendix \ref{ap:npc1}.}
\end{proof}

\subsection{Optimal UAV deployment from the same location} \label{sec:same}

\begin{algorithm}[t]
\caption{Optimal UAV dispatching algorithm from the same location}
\begin{algorithmic}[1]

\STATE \textbf{Input:}\\
$\textbf{U} =\{\mu_1,\mu_2, \ldots, \mu_n\}$ \\

\STATE \textbf{Output:}\\
$y_{i}^*$: final location of $\mu_{i}$ \\
${T}$: optimal deployment delay

\STATE $\overline{\beta} = \beta, \textbf{U}^- = \textbf{U}, \textbf{T} = \emptyset$
\WHILE{$\overline{\beta} > 0$}
\STATE $\mu_j \gets \arg \min_{\mu_i \in \textbf{U}^-} \frac{\sqrt{(\overline{\beta}-r_i)^2 + h_i^2}}{v_i}$
\STATE {$\textbf{T} \gets \textbf{T} \cup T_{j} = \frac{\sqrt{(\overline{\beta} - r_j)^2 + h_j^2}}{v_j}$}
\STATE {$\overline{\beta} \gets \overline{\beta} - 2r_j$}
\STATE {$y_{j} \gets \overline{\beta} - r_j$ , \  $\textbf{U}^- \gets \textbf{U}^- \setminus \{\mu_j\}$}
\ENDWHILE
\RETURN {${T} \gets \max \textbf{T}$}
\end{algorithmic}
\label{alg:common}
\end{algorithm}

We first study a special case of problem (\ref{general}) by dispatching the UAVs from the same initial location (i.e., $x_i = x_j$ for $1 \leq i, j \leq n$). Without loss of generality, we assume that $x_i \leq 0$, $\forall \ \mu_i,  1 \leq i \leq n$, which is symmetric to the case of $x_i \geq \beta$. Note that for the case of $0 < x_i < \beta$, we can divide the line interval into two subintervals, i.e., $[0, x_i]$ and $[x_i, \beta]$, and apply our deployment algorithm (as presented later) similarly over both subintervals. 

If a UAV has the larger the distance from the initial location to the target position, it incurs a larger travel time. Among all UAVs, we first consider which UAV to send and cover the furthest point of the target area. Specifically, given the current uncovered line interval ($[0, \beta]$ initially or uncovered subinterval), we sequentially select an unassigned UAV (e.g., $\mu_i$) with the minimum travel time to just cover the furthest point on the remaining uncovered interval during deployment. In our problem, though we dispatch all UAVs simultaneously, it is equivalent to dispatching of UAVs one by one to cover the line interval $[0, \beta]$. We only count and compare each UAV's travel time to calculate the maximum delay objective.

As shown in Algorithm~\ref{alg:common}, initially, we set $\overline{\beta}= \beta$ and the available (unassigned) UAV set $\textbf{U}^-=\textbf{U}$ , as we haven't sent any UAV to cover any point in the line interval yet. In each iteration, we dispatch a UAV $\mu_j$ with the minimum travel distance $T_j=\frac{\sqrt{(\overline{\beta}-r_j)^2 + h_j^2}}{v_j}$ in the available UAV set $\textbf{U}^-$ to new position $(\overline{\beta}-r_j, h_j)$. Then the uncovered interval decreases from $[0,\overline{\beta}]$ to $[0, \overline{\beta}-2r_j]$. We record $\mu_j$'s travel time $T_j$ into set $\textbf{T}$ and remove UAV $\mu_j$ from $\textbf{U}^-$. We continue to dispatch another UAV until the target interval is fully covered. In the end, we obtain the maximum deployment delay $\max{\textbf{T}}$ as the optimum $T$. Note that given the UAVs operating altitudes, we prefer to deploy those UAVs with larger flying speeds and larger coverage radius further away in the target area.


\begin{proposition}\label{common}
Algorithm~\ref{alg:common} optimally solves the min-max deployment problem in (\ref{general}) when dispatching $n$ UAVs from the same location. Its computation complexity is $O(n^2)$.
\end{proposition}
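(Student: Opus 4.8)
The plan is to split the proposition into the running-time bound, which is immediate, and the optimality guarantee, which I would prove by induction on the number of UAVs combined with an exchange argument. For the complexity, observe that the \texttt{while}-loop runs at most $n$ times, since each iteration permanently deletes one UAV from $\textbf{U}^-$ on the last line. Each iteration scans the remaining UAVs once to evaluate $\arg\min_{\mu_i\in\textbf{U}^-}\frac{\sqrt{(\overline{\beta}-r_i)^2+h_i^2}}{v_i}$ at cost $O(|\textbf{U}^-|)=O(n)$, so the total running time is $O(n^2)$.

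For optimality, I would first establish a placement lemma that eliminates all continuous freedom. Because all UAVs share the same initial location, which we place at the origin, the travel time $T_i(y_i)=\frac{\sqrt{y_i^2+h_i^2}}{v_i}$ is nondecreasing in $y_i$ for $y_i\ge 0$. Hence, among all placements of a given UAV $\mu_i$ that cover the current rightmost uncovered point $\overline{\beta}$ (equivalently $y_i\ge\overline{\beta}-r_i$), the tight placement $y_i=\overline{\beta}-r_i$ simultaneously (a) minimizes $\mu_i$'s travel time, yielding exactly $\frac{\sqrt{(\overline{\beta}-r_i)^2+h_i^2}}{v_i}$, and (b) maximizes the leftward reach, covering $[\overline{\beta}-2r_i,\overline{\beta}]$. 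This ``double-win'' is what makes the greedy sound: there is no tension between spending less travel on the rightmost point and covering more of the interval. Using this together with the monotonicity of $T_i$ (pushing redundant or overlapping UAVs leftward never increases travel), I would argue that an optimal deployment may be taken to place every covering UAV tightly and to tile $[0,\beta]$ from right to left, so the problem collapses to choosing an ordered subset of UAVs whose radii sum to at least $\beta/2$, with the min-max objective equal to the maximum per-step travel time along the tiling.

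With this structure, the induction proceeds as follows. The empty-interval base case is trivial. For the inductive step, let $\mu_j$ be the UAV chosen by Algorithm~\ref{alg:common} to cover $\overline{\beta}$, so its cost equals $\min_i\frac{\sqrt{(\overline{\beta}-r_i)^2+h_i^2}}{v_i}$. Since every feasible deployment must cover the point $\overline{\beta}$ by some UAV, and by the placement lemma that UAV pays travel time at least this minimum, greedy's first-step cost is a valid lower bound on the optimum of the current subproblem. It then remains to show that committing $\mu_j$ to the rightmost slot and recursing on $[0,\overline{\beta}-2r_j]$ with $\textbf{U}^-\setminus\{\mu_j\}$ preserves optimality; invoking the induction hypothesis on this strictly smaller subproblem then closes the argument.

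The main obstacle is exactly this last exchange. Taking any optimal deployment, let $\mu_k$ be the UAV it uses to cover $\overline{\beta}$; if $\mu_k=\mu_j$ we simply recurse, so assume $\mu_k\neq\mu_j$. The difficulty is that $\mu_j$ may already be used elsewhere in the optimal solution, and that $r_j\neq r_k$ changes the covered length when the two UAVs are exchanged. The accounting observation I would exploit is that the change in residual coverage length, $2(r_k-r_j)$, exactly matches the change in width obtained by swapping $\mu_j$ for $\mu_k$ in the remaining fleet; hence the optimal solution of the residual problem $(\textbf{U}^-\setminus\{\mu_k\},\,\overline{\beta}-2r_k)$ can be rewritten, by replacing $\mu_k$'s role with $\mu_j$ and absorbing the $2(r_k-r_j)$ length difference, into a feasible solution of $(\textbf{U}^-\setminus\{\mu_j\},\,\overline{\beta}-2r_j)$. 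Monotonicity of $T_i$ in position guarantees that relocating the swapped UAV to cover a shorter residual interval (hence a smaller rightmost point) only decreases its travel time, while $\frac{\sqrt{(\overline{\beta}-r_j)^2+h_j^2}}{v_j}\le\frac{\sqrt{(\overline{\beta}-r_k)^2+h_k^2}}{v_k}$ bounds the newly committed $\mu_j$. Carrying out this bookkeeping in both cases $r_k>r_j$ and $r_k<r_j$---together with the boundary case $\overline{\beta}\le r_j$, where $\mu_j$ placed at the origin already covers the entire residual interval and the recursion terminates---is where the real work lies and, I expect, will consume the bulk of the proof.
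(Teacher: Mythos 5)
Your proposal is correct and follows essentially the same route as the paper's proof: both rest on the observation that the UAV covering the current rightmost uncovered point $\overline{\beta}$ is best placed tightly at $y=\overline{\beta}-r_i$ (simultaneously minimizing its travel time and maximizing leftward reach), followed by an exchange argument in which the greedy-chosen UAV is relocated to the rightmost slot and the remaining UAVs are shifted leftward, which can only decrease their delays. The paper organizes this as an iterative transformation of the optimal sequence into the greedy one (splitting into the cases where the greedy UAV does or does not already appear in the optimal solution), while you package the same exchange and the same $2r$ length bookkeeping inside an induction on the residual subproblem; the substance is identical.
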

\begin{proof}
See Appendix \ref{ap:common}.
\end{proof}
%
%
%
\subsection{Problem reformulation under order preserving of UAVs' locations} \label{sec:iop}

Since problem (\ref{general}) is NP-complete generally, there is no efficient algorithm to find the optimal solution, unless P = NP. Accordingly, we propose that the UAVs preserve their initial locations' order during the deployment. Without loss of generality, we assume $x_1 \leq x_2 \ldots \leq x_n$. Given the order $\propto$ according to the UAVs' initial locations $x_i$'s, the final locations $y_i$'s of UAVs must meet the requirement: $y_i \leq y_j$ if and only if $x_i \leq x_j$. This simplifies the coordination among UAVs, and thus will simplify the algorithm design later. In practice, this is reasonable as it avoids any possible collision when two UAVs cross each other to reach their final positions \cite{Mahjri2017SLIDE}.

Our optimization problem is to minimize the deployment delay for reaching full coverage of the target area subject to the order $\propto$, i.e.,
{
\begin{align}
\label{IOP}
&~ \min_{\{y_1,\cdots, y_n\}} \underset{1 \leq i \leq n} {\max} ~ {T}_i(y_i) ~ , \\
&~ {\rm s.t.}, \ [0,\beta] \subseteq  \bigcup_1^n [y_i - r_i, y_i + r_i],\notag \\
&~~~~~~~ y_i \leq  y_{i+1},~ \forall \ 1 \leq i \leq n-1. \notag
\end{align}}
\noindent Note that the last inequality is due to location order preserving and $x_1 \leq \cdots \leq x_n$. This simplified problem is still difficult to solve since selecting a specific combination of UAVs as the optimal UAV deployment is generally exponential in the number of UAVs. In the following, we first introduce the feasibility checking problem for problem (\ref{IOP}) and design the corresponding algorithm to determine whether we can find a deployment scheme within the deadline. Then, we use binary search over those feasible deadlines to find the minimum deployment delay (deadline).

\subsubsection{Feasibility checking problem}

\begin{algorithm}[t]
\caption{Feasibility checking algorithm}
\begin{algorithmic}[1]

\STATE \textbf{Input:}\\ 
$\textbf{U} =\{\mu_1,\mu_2, \ldots, \mu_n\}$ \\
$T$: a given deployment delay deadline for all UAVs

\STATE \textbf{Output:}\\ 
$y_{i}$: final locations of $\mu_{i}$

\STATE Compute $a_{i}$ in equation (\ref{eq:a}) and $b_{i}$ in equation (\ref{eq:b}) if $v_i T \geq h_i$
\STATE $\overline{\beta} = 0$; $\textbf{U}^- = \textbf{U}$; $\textbf{S}^{c} = \emptyset$

\FOR { $i = 1$ to $n$ }
\IF {$\overline{\beta} \notin [a_i, b_i]$ or $v_i T < h_i$}
	\STATE {$\textbf{U}^- \gets \textbf{U}^{-} \setminus \{\mu_{i}\}$}
\ELSE
	\STATE {$y_{i} \gets \min\{\overline{\beta} + r_{i}, b_{i} - r_i\}$}
	\STATE {$\textbf{S}^{c} \gets \textbf{S}^{c} \cup \{u_{j} \in U^{-}: j < i, y_{j} > y_{i}\}$}
	\STATE {$\textbf{U}^- \gets \textbf{U}^{-} \setminus \textbf{S}^{c}$, \ $\overline{\beta} \gets y_{i}+r_i$}
\ENDIF

\IF {$\overline{\beta} < \beta$}
	\STATE {Break;}\\
\ENDIF
\ENDFOR

\IF {$\overline{\beta} < \beta$}
	\RETURN $T \ is \ not feasible$ ($T < T^*$)\\
\ELSE
	\RETURN $T \ is \ feasible$ ($T \geq T^*$)\\
\ENDIF

\end{algorithmic}
\label{alg:decalg}
\end{algorithm}

We first define the feasibility checking problem as follows: given any deployment delay $T>0$ and order requirement $\propto$, determine whether UAVs can be moved to reach a full coverage within deadline $T$. Let $T^*$ denotes the optimal deployment delay of problem (\ref{IOP}), we next design a feasibility checking algorithm to determine whether ${T} \geq {T}^*$ or whether such $T$ is feasible to achieve via UAV dispatching.

Consider any ${T} > 0$, for UAV $\mu_i \in \textbf{U}$ with altitude $h_i$, if $v_i \cdot T \geq h_i$, $\sqrt{(v_i T)^2 - h_i^2}$ is the maximum \emph{horizontal} distance to move on $L = [0, \beta]$. We define $a_i$ as the leftmost point and $b_i$ as the rightmost point on $L$ that can be covered by $\mu_i$ within $T$. We call $a_i$ (resp., $b_i$) the {\em leftmost (resp., rightmost) $T$-coverable point} of $\mu_i$. Then we have

{
\begin{align}
\label{eq:a}
a_i=x_i-r_i-\sqrt{(v_i T)^2 - h_i^2},
\end{align}
}
{
\begin{align}
\label{eq:b}
b_i=x_i+r_i+\sqrt{(v_i T)^2 - h_i^2}.
\end{align}
}
Algorithm~\ref{alg:decalg} solves the feasibility checking problem. It first computes $a_i$ and $b_i$ in equations (\ref{eq:a}) and (\ref{eq:b}), then deploys the UAVs one by one according to the order $\propto$ from the left endpoint of target interval $[0, \beta]$. As $x_1 \leq x_2 \leq \cdots \leq x_n$, we start with UAV $\mu_1$ and end up with $\mu_n$. Given our current covered interval $[0, \overline{\beta}]$ where the boundary $\overline{\beta} < \beta$, iteration $i$ starts with checking whether UAV $\mu_i$ can fly to altitude $h_i$ (i.e., $v_i \cdot T \geq h_i$) or not.
\begin{itemize}
\item If $v_i T < h_i$, we will not consider dispatching UAV $\mu_i$.
\item If $v_i T \geq h_i$, we still need to check if $\mu_i$ can seamlessly cover the point $\overline{\beta}$ (i.e., $\overline{\beta} \in [a_i, b_i]$). If this also holds, we will efficiently deploy $\mu_i$ to $y_i = \min(\overline{\beta}+r_i, b_i- r_i)$.
\end{itemize}

Noted that once $\mu_i$ is deployed to the left of UAV $\mu_j$, in which $j < i$, then Algorithm~\ref{alg:decalg} in line 10 will undo dispatching of $\mu_j$ and will not use this UAV. After a successful dispatching of UAV $\mu_i$, the covered interval prolongs from $[0, \overline{\beta}]$ to $[0,y_i+r_i]$ in this iteration.

If $T$ is feasible ($T \geq T^*$), our algorithm will return a subset $\textbf{U}^-$ of UAVs and their new locations $y_i$'s to fully cover target $L$ within $T$. For each UAV $\mu_i \in \textbf{U} \setminus \textbf{U}^-$, it will not be used and just stay at the initial location.

\begin{proposition} \label{lm:decalg}
The feasibility checking problem for a particular deadline is optimally solved by Algorithm~\ref{alg:decalg} in $O(n^2)$ time.
\end{proposition}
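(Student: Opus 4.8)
The plan is to prove the two claims of the proposition separately: correctness (Algorithm~\ref{alg:decalg} outputs ``feasible'' if and only if $T \ge T^*$) and the $O(n^2)$ running time, with the latter being immediate from the loop structure. For the easy direction of correctness, I would show that whenever the algorithm terminates with $\overline{\beta} \ge \beta$, the recorded positions $\{y_i\}$ form a genuine order-preserving deployment meeting the deadline. Three facts suffice. First, every placed UAV respects $T$: since the check $\overline{\beta}\in[a_i,b_i]$ gives $\overline{\beta}\ge a_i$ and since $b_i-a_i = 2r_i + 2\sqrt{(v_iT)^2-h_i^2}\ge 2r_i$, the rule $y_i=\min(\overline{\beta}+r_i,\,b_i-r_i)$ forces $a_i+r_i \le y_i \le b_i-r_i$, hence $|y_i-x_i|\le \sqrt{(v_iT)^2-h_i^2}$ and $T_i(y_i)\le T$. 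Second, the left edge satisfies $y_i-r_i \le \overline{\beta}$, so consecutive coverage windows overlap seamlessly and the union is exactly $[0,\overline{\beta}]$. Third, the conflict removal in line~10 guarantees the surviving $y$-values are non-decreasing, so the order constraint $\propto$ holds.

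For the hard direction (feasibility implies the greedy succeeds), I would run a ``greedy stays ahead'' induction. Let $\overline{\beta}^{(i)}$ denote the right boundary of the greedy's covered prefix after iteration $i$, and let $c^{(i)}$ be the largest value such that some order-preserving within-deadline deployment seamlessly covers $[0,c^{(i)}]$ using a subsequence of $\{\mu_1,\dots,\mu_i\}$. The claim is $\overline{\beta}^{(i)}\ge c^{(i)}$ for all $i$. In the inductive step I compare the greedy against an optimal partial solution $S$ over $\{\mu_1,\dots,\mu_i\}$: if $S$ omits $\mu_i$ the bound follows from the induction hypothesis together with monotonicity of $\overline{\beta}^{(\cdot)}$ (which I must verify: because the ELSE branch is entered only when $\overline{\beta}\le b_i$, the new boundary $y_i+r_i=\min(\overline{\beta}+2r_i,\,b_i)\ge\overline{\beta}$ never decreases); if $S$ uses $\mu_i$, then an exchange argument using the fact that the greedy pushes $\mu_i$'s right reach to the maximum value $\min(\overline{\beta}^{(i-1)}+2r_i,\,b_i)$ compatible with seamless coverage shows $\overline{\beta}^{(i)}$ dominates $S$'s boundary. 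Evaluating the claim at $i=n$ finishes this direction: if $T$ is feasible then $c^{(n)}\ge\beta$, so $\overline{\beta}^{(n)}\ge\beta$ and the algorithm reports ``feasible.''

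I expect the main obstacle to be the interplay between the order constraint and the greedy, i.e.\ justifying the conflict-removal step in line~10. The crux is an exchange lemma: when placing $\mu_i$ forces $y_i<y_j$ for an already-placed predecessor $\mu_j$ with $j<i$ (which, as a short computation shows, can occur only when $r_i>r_j$ and $\mu_i$ has limited reach $b_i$), any feasible order-preserving solution that uses both $\mu_j$ and $\mu_i$ must place $\mu_j$ no further right than $y_i$; hence discarding $\mu_j$ in favor of $\mu_i$ cannot decrease the attainable coverage while restoring $y_j\le y_i$. Establishing that this local rewriting never sacrifices coverage—so a feasible instance is never wrongly rejected—is the delicate part and is what makes the stays-ahead invariant compatible with the order requirement.

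Finally, the running-time bound follows by inspection. Computing every $a_i$ and $b_i$ is $O(1)$ amortized, contributing $O(n)$ overall; the outer loop executes at most $n$ times; and the only nonconstant work inside an iteration is the scan in line~10 over indices $j<i$ to form the conflict set, costing $O(i)=O(n)$. Summing over iterations yields the claimed $O(n^2)$ total time, which completes the proof sketch.
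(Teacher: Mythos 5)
Your proposal is correct and follows essentially the same route as the paper's proof: a feasibility check of the output (deadline respected, seamless coverage, order preserved via the same $y_j - r_j > y_i - r_i$ exchange computation behind line~10), a ``greedy stays ahead'' induction on the longest left-aligned prefix coverable by the first $i$ UAVs for the converse, and the identical $O(n)$-per-iteration accounting for the $O(n^2)$ bound. The only difference is cosmetic: you make explicit the monotonicity of the covered boundary and the condition $r_i > r_j$ under which a conflict can arise, details the paper leaves implicit.
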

\begin{proof}
See Appendix~\ref{ap:dec}.
\end{proof}

Remark that the feasibility checking problem has independent interest because it characterizes the minimization problem model, in which each UAV has the same deployment delay deadline $T$ and we want to know whether they can move to reach a full coverage.

\subsubsection{Binary search over feasible deadlines}
\label{sec:bs}

With the help of Algorithm~\ref{alg:decalg}, we can verify whether a given deadline $T$ is feasible or not. The minimum deadline among all feasible ones is actually the optimum of problem (\ref{IOP}). Here, we apply binary search to find the minimum deadline and solve problem (\ref{IOP}). Before the search, we still need to determine the search scope and step of $T$.

For each single UAV $\mu_i$, the minimum moving distance is altitude $h_i$. Thus, the lower bound of $T$ (denoted as $T_l$) among all UAVs can be determined according to
{
\begin{align}
\label{eq:low}
T_l = \min_{1 \leq i \leq n} \frac{h_i}{v_i}.
\end{align}
}

In general, $T_l$ is not feasible because it is the minimum possible travel time among all UAVs. We next determine the upper bound of $T$ (denoted as $T_u$). For UAV $\mu_i$, the maximum possible moving distance of $\mu_i$ is to reach position $(0, h_i)$ or $(\beta, h_i)$ beyond the leftmost or rightmost location on the target interval $L = [0,\beta]$. Thus, the upper bound of $T$ (denoted as $T_u$) among all UAVs is given by

{
\begin{align}
\label{eq:upp}
T_u = \max_{1 \leq i \leq n} \{\frac{\max \{\sqrt{(\beta - x_i)^2 + h_i^2}, \sqrt{x_i^2 + h_i^2}\}}{v_i}\}.
\end{align}
}

In the binary search, we define the relative error as $\epsilon$ which is a small constant value, and accordingly set the search accuracy as $\epsilon T_l$. As illustrated in Figure~\ref{fig:binary_search}, the binary search starting with $T_l$ stops once switching from infeasible deadline $T^{\prime}$ to feasible $T^{\prime\prime}$, such that the resultant $T^{\prime\prime}$ is our searched optimum for problem (\ref{IOP}).

\begin{figure}[t]
    \centering
        \includegraphics[width=0.75\textwidth]{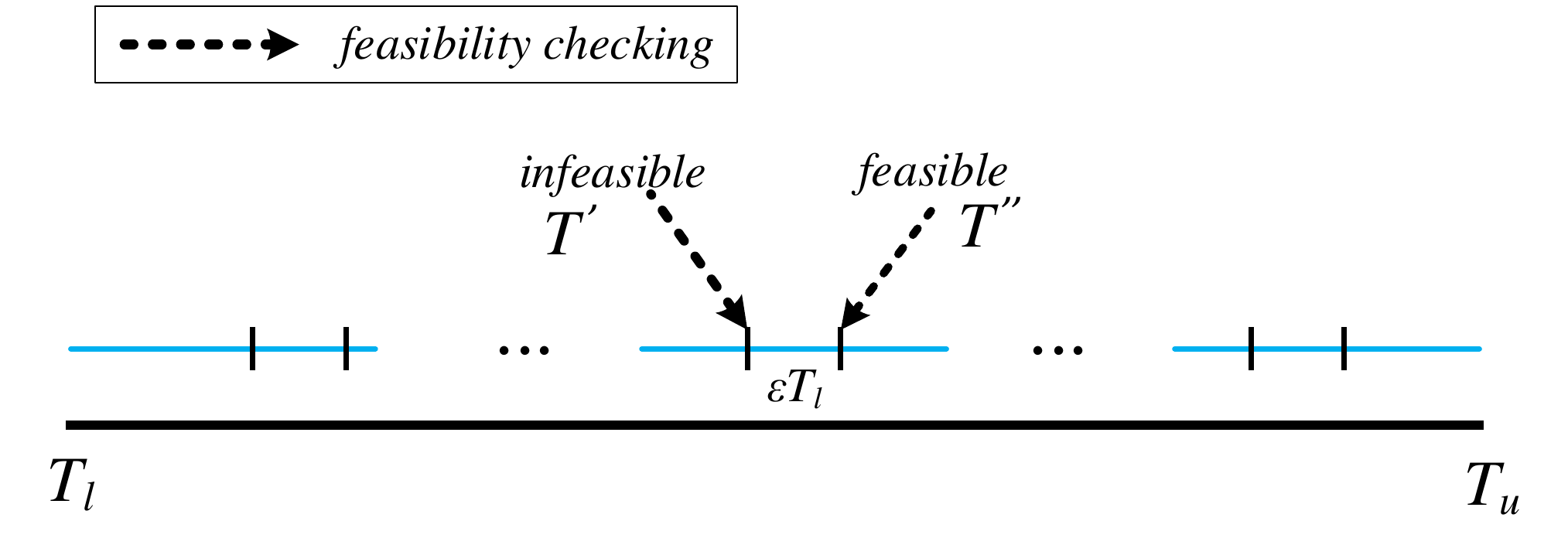}
    \caption{Binary search on $[T_l ,T_u]$ with accuracy level of $\epsilon \cdot T_l$.}
    \label{fig:binary_search}
\end{figure}


Thus, we can obtain the following fully polynomial-time approximation scheme (i.e., Algorithm~\ref{alg:fptas}) to solve problem (\ref{IOP}) by combining both binary search and Algorithm~\ref{alg:decalg}.

%
%
%

\begin{theorem}\label{thrm:FPTAS}
Let $T^*$ be the optimal deployment delay of problem (\ref{IOP}). Given any small allowable error $\epsilon >0$, there exists an FPTAS with running time $O(n^2 \log \frac{1}{\epsilon})$ to arbitrarily approach the global optimum (i.e., $T^* \leq T \leq (1 + \epsilon)T^*$).
\end{theorem}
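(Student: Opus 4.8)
The plan is to show that Algorithm~\ref{alg:fptas}, which wraps the feasibility-checking subroutine of Algorithm~\ref{alg:decalg} inside a binary search on $[T_l, T_u]$, simultaneously delivers the relative-error guarantee $T^* \le T \le (1+\epsilon)T^*$ and runs in time $O(n^2 \log\frac{1}{\epsilon})$. I would split the argument into a correctness part and a complexity part, relying throughout on Proposition~\ref{lm:decalg}, which tells us that a single call on deadline $T$ decides exactly whether $T \ge T^*$ in $O(n^2)$ time.

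For correctness, the first thing I would establish is the \emph{monotonicity} of feasibility: increasing $T$ only enlarges every coverable interval $[a_i, b_i]$ (the term $\sqrt{(v_i T)^2 - h_i^2}$ in~(\ref{eq:a}) and~(\ref{eq:b}) is nondecreasing in $T$), so any deadline exceeding a feasible one is itself feasible. Hence the feasible set is exactly the closed ray $[T^*, \infty)$, which is what makes the binary search well defined. Next I would check that the search window brackets the optimum, i.e. $T_l \le T^* \le T_u$: every UAV that is actually dispatched must climb to its altitude and therefore spends at least $h_i/v_i \ge T_l$, so $T^* \ge T_l$; and $T_u$ is feasible by construction, because at deadline $T_u$ each UAV can reach either endpoint of $[0,\beta]$, so the coverage assumption $2\sum_i r_i \ge \beta$ guarantees a full cover. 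The search then terminates at the first switch from an infeasible $T'$ to a feasible $T'' = T$ with $T'' - T' \le \epsilon T_l$. By Proposition~\ref{lm:decalg}, $T' < T^* \le T'' = T$, which already gives the lower bound $T \ge T^*$. For the upper bound I would chain $T = T'' \le T' + \epsilon T_l < T^* + \epsilon T_l \le T^* + \epsilon T^* = (1+\epsilon)T^*$, where the crucial step is the substitution $T_l \le T^*$ that converts the additive search accuracy $\epsilon T_l$ into the desired relative error.

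For the complexity, each feasibility query costs $O(n^2)$ by Proposition~\ref{lm:decalg}, and a binary search narrowing a window of length $T_u - T_l$ down to resolution $\epsilon T_l$ needs $O(\log \frac{T_u - T_l}{\epsilon T_l}) = O(\log \frac{T_u-T_l}{T_l} + \log\frac{1}{\epsilon})$ iterations, so the product is the claimed bound once the input-dependent ratio $\frac{T_u - T_l}{T_l}$ is treated as a constant factor. I expect this last point to be the main obstacle, or at least the most delicate part of the write-up: the clean statement $O(n^2 \log\frac{1}{\epsilon})$ silently absorbs the $\log\frac{T_u - T_l}{T_l}$ term, so to be rigorous one must argue that this ratio of largest-to-smallest possible travel time is bounded by a problem constant (equivalently, polynomial in the input bit-length) rather than blowing up with the instance. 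The remaining steps --- monotonicity and the bracketing inequalities $T_l \le T^* \le T_u$ --- are routine but must be stated explicitly, since the entire relative-error conversion hinges on $T_l \le T^*$.
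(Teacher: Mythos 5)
Your proof follows the same route as the paper's: bracket the optimum by $T_l \le T^* \le T_u$, binary-search down to accuracy $\epsilon T_l$ using Algorithm~\ref{alg:decalg} as the feasibility oracle, and convert the additive search error into a relative one via the substitution $T_l \le T^*$, yielding $T'' \le T' + \epsilon T_l \le (1+\epsilon)T^*$ exactly as in the paper. You are in fact more explicit than the paper on two points it leaves implicit --- the monotonicity of feasibility in $T$ (which makes the binary search well defined) and the hidden $\log \frac{T_u}{T_l}$ factor in the iteration count, which the stated bound $O(n^2 \log \frac{1}{\epsilon})$ silently absorbs just as you suspect.
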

\begin{proof}
The deployment delay of a given instance has an upper bounded of $T_u$ and a lower bound of $T_l$. Obviously, $T_l \leq T^* \leq T_u$. Choosing a small constant $\epsilon > 0$, we divide each $T_l$ into $\frac{1}{\epsilon}$ sub-intervals. Here, to make the discussion easier, we assume $\frac{1}{\epsilon}$ is an integer). Each interval has length $\epsilon \cdot T_l$, where $\epsilon \cdot T_l \leq \epsilon \cdot T^*$. We divide $T_u$ by $\epsilon \cdot T_l$ into $\lceil{\frac{T_u}{\epsilon \cdot T_l}}\rceil$ sub-intervals as $I$ as in Algorithm~\ref{alg:fptas}. Overall, we have $\lceil\frac{T_u}{\epsilon \cdot T_l}\rceil$ intervals on $I$.

Then, each step of binary search will shrink the interval $I$ by applying Algorithm~\ref{alg:decalg} on certain value of $T$. It terminates with deployment delays $T^{\prime}$ and $T^{\prime\prime}$, as shown in Figure~\ref{fig:binary_search}, in which $T^{\prime} < T^*$ and $T^{\prime\prime} = T^{\prime} + \epsilon \cdot T_l > T^*$. The resultant $T^{\prime\prime}$ is our searched optimum for problem (\ref{IOP}). We have that $T^{\prime\prime} = T^{\prime} + \epsilon \cdot T_l \leq T^* + \epsilon \cdot T_l \leq (1 + \epsilon) T^*$. Overall, we obtain $T^* < T \leq (1 + \epsilon) T^*$.

Therefore, we obtain the deployment delay which has an approximation ratio $1+\epsilon$ over the global optimum. Our feasibility checking algorithm runs in $O(n^2)$ time, and we have $O(\lceil\frac{T_u}{\epsilon \cdot T_l}\rceil)$ candidate deadlines. Overall, this algorithm runs in $O(n^2 \log \frac{1}{\epsilon})$ since binary search runs in at worst logarithmic time.
\end{proof}

\noindent Note that the relative error of the proposed FPTAS is only due to the small constant value $\epsilon$ that we choose in binary search.

\begin{algorithm}[t]
\caption{FPTAS for minimizing the maximum deployment delay}
\begin{algorithmic}[1]

\STATE \textbf{Input:}\\ 
$I = \{\epsilon T_l, 2 \epsilon T_l, \cdots,  \lceil{\frac{T_u}{\epsilon \cdot T_l}}\rceil \epsilon T_l\}$ where $T_l$ and $T_u$ are given in (\ref{eq:low}) and (\ref{eq:upp})\\

\STATE \textbf{Output:}\\ 
$I(idx)$: $idx$ is the index

\STATE $low \gets 1$ and $high \gets \lceil{\frac{T_u}{\epsilon \cdot T_l}}\rceil$
\WHILE{$low <= high$}
       \STATE {$mid \gets floor((low + high)/2)$}
       \STATE {feasibility checking by Algorithm~\ref{alg:decalg} on $I(mid)$}
       \IF {$I(mid)$ is feasible}
            \STATE {$high \gets mid$}
       \ELSE
            \STATE {$low \gets mid$}
       \ENDIF
        \IF {$low == high-1$}
            \STATE {$idx \gets high$}
            \STATE {break}
        \ENDIF
\ENDWHILE
\RETURN $I(idx)$
\end{algorithmic}
\label{alg:fptas}
\end{algorithm}

\section{Total UAV deployment delay optimization}\label{sec:sum}

In this section, we further consider the efficiency problem through minimizing the total UAV deployment delay for covering the target interval. We first show that problem (\ref{sum}) when UAV dispatching from different locations is NP-complete by reduction from {\em 3-partition problem} \cite{garey2002computers}. The proof is similar to Theorem~\ref{thry:np1}.

\begin{theorem} \label{thry:np2}
The total deployment delay minimization problem in (\ref{sum}) is NP-complete.
\end{theorem}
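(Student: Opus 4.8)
The plan is to prove Theorem~\ref{thry:np2} in two parts: first establishing membership in NP, then NP-hardness by a reduction from the \emph{3-partition problem} that parallels the construction used for Theorem~\ref{thry:np1}, changing only how the objective threshold is argued.

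First I would handle membership. A certificate is simply the vector of final positions $\{y_1,\dots,y_n\}$. Given it, one checks in $O(n\log n)$ time (by sorting the coverage intervals $[y_i-r_i,\,y_i+r_i]$ and performing a single left-to-right sweep) whether $[0,\beta]\subseteq\bigcup_i[y_i-r_i,y_i+r_i]$, and evaluates $\sum_i T_i(y_i)$ from~(\ref{eq:time}) against the target threshold directly. Hence the decision version of~(\ref{sum}) lies in NP.

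Next, the reduction. Given a 3-partition instance with $3m$ integers $a_1,\dots,a_{3m}$, each strictly between $B/4$ and $B/2$ and summing to $mB$, I would build a UAV instance reusing the gadget of Theorem~\ref{thry:np1}: create $3m$ UAVs whose coverage radii encode the $a_j$'s (e.g.\ $r_j$ affine in $a_j$) and carve the target line $[0,\beta]$ into $m$ consecutive \emph{blocks} separated by short mandatory coverage segments, so that full coverage forces every block to be served by UAVs whose radii sum to exactly the value corresponding to $B$. Because each $a_j\in(B/4,B/2)$, no block can be covered by two or by four of them, so each block needs exactly three. I would then choose initial locations $x_j$ and speeds $v_j$ so that, in the intended configuration, each UAV incurs an identical predetermined travel cost, making the total delay equal a fixed value $\Gamma_0$ precisely when the blocks partition the UAVs into triples of radius-sum $B$, i.e.\ precisely when a valid 3-partition exists.

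The main obstacle, and the essential difference from the min-max case, is the \emph{rigidity} of the sum objective. In the min-max reduction it suffices that every individual delay stays below $T$; here a solver could in principle let one UAV travel farther while another travels less, keeping the sum small and possibly defeating the block structure. To rule this out I would make every deviation strictly costly: the block-separating segments leave zero coverage slack, so any UAV pushed out of its block must be compensated by strictly more travel elsewhere, and the monotonicity (and convexity) of $T_i(\cdot)$ in~(\ref{eq:time}) guarantees that redistributing coverage away from the tight packing strictly increases $\sum_i T_i$. Thus $\sum_i T_i\le\Gamma_0$ is attainable if and only if the packing is exact, closing the equivalence. Since 3-partition is strongly NP-complete and all constructed parameters are polynomially bounded in the input size, the reduction runs in polynomial time, establishing NP-completeness of~(\ref{sum}).
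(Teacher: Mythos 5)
Your overall skeleton (membership via a position certificate, hardness via 3-partition with $3m$ radius-encoding UAVs and $m$ blocks of length $B$ separated by mandatory unit segments) matches the paper's reduction, and the NP-membership argument is fine. However, the step you rely on to close the equivalence has a genuine gap. You propose to ``choose initial locations $x_j$ and speeds $v_j$ so that, in the intended configuration, each UAV incurs an identical predetermined travel cost, making the total delay equal a fixed value $\Gamma_0$ precisely when'' a valid partition exists. This cannot be carried out: a UAV's travel distance depends on \emph{which} block it is assigned to (roughly $iB$ for the $i$-th block) and on the positions of the other two UAVs in its triple, and that assignment is exactly the unknown the reduction is supposed to extract. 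You cannot pre-tune $v_j$ to cancel a distance you do not know without presupposing the partition. Different valid 3-partitions (and different orderings within a triple) yield different totals, so there is no single $\Gamma_0$ hit ``precisely when'' a partition exists. The paper avoids this entirely by taking the threshold $K=3Bm(m+1)+3(m-1)$ as a \emph{loose} polynomial upper bound on the total travel of any partition-induced deployment; the forward direction only needs $\sum_i T_i \le K$, not equality.

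The second gap is in how you enforce rigidity in the reverse direction. You assert that zero coverage slack plus monotonicity and convexity of $T_i(\cdot)$ in~(\ref{eq:time}) makes every deviation strictly costly, but convexity of individual travel times does not by itself prevent a solver from moving a separator-covering UAV and reshaping the blocks while keeping the sum below threshold; that requires an explicit gadget. The paper's mechanism is twofold: (i) the radii of all $4m-1$ UAVs sum to exactly $\beta/2$, so full coverage forces every UAV to be used with no overlap anywhere; and (ii) the $m-1$ separator UAVs are placed already covering the unit gaps and are given speed $\frac{1}{K+1}$, so moving any one of them a distance of $1$ or more costs more than the entire budget $K$ by itself. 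Only with both of these does ``total delay $\le K$'' force the separators to stay put and the $m$ residual gaps of length exactly $B$ to be tiled by triples of radius-sum $B$. Your write-up names the obstacle correctly but does not supply either of these two ingredients, so the reverse implication is not established as written.
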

\begin{proof}
{See Appendix~\ref{ap:npc2}.}
\end{proof}

\subsection{Fast algorithm for UAVs deployment from the same location} \label{sec:samesum}

We first study the problem of dispatching the UAVs from the same initial location, i.e., $x_i = x_j$ for $1 \leq i ,j \leq n$. Different from the previous min-max optimization problem in Section~\ref{sec:same}, the problem here is still difficult to solve. Because in min-max optimization problem, we only focus on the bottleneck (the maximum one) of all UAVs' deployment delay and reduce deployment delays of all UAVs in a fair manner, while the min-sum problem targets at minimizing the sum of the deployment delays of selected UAVs in the solution. Without loss of generality, we assume that $x_i \leq 0$, $\forall \ \mu_i,  1 \leq i \leq n$, which is symmetrical to the case that $x_i \geq \beta$. 
Our problem of dispatching all UAVs simultaneously is equivalent to dispatching of UAVs one by one to cover the line interval $[0, \beta]$ from its right endpoint (or furthest point) $\beta$ to left endpoint (closest point) $0$. Intuitively, if all UAVs have the same flying speed and operating altitude, the optimal deployment scheme is to deploy a UAV with longer wireless coverage radius to further location for saving the travel distance and delay. Specifically, given a target interval ($[0,\beta]$ initially or remaining uncovered interval during deployment), we sequentially select the unused UAV with the longest wireless coverage radius among all available UAVs to reach the furthest point in the remaining uncovered interval. 

As shown in Algorithm~\ref{alg:commonsum}, initially, we set $\overline{\beta}= \beta$ and the available UAV set $\textbf{U}^-=\textbf{U}$. In each iteration, we dispatch a UAV $\mu_j$ from the available UAV set $\textbf{U}^-$ with longest wireless coverage radius $r_j$ to extend the current covered interval $[\overline{\beta}, \beta]$, as shown in Lines 7 and 8 of Algorithm~\ref{alg:commonsum}. Next, we add $\mu_j$'s travel time $T_j$ to $\textbf{T}$. Then, we update the covered interval to $[\overline{\beta}-2r_j, \beta]$ and remove the UAV $\mu_j$ from $\textbf{U}^-$ until the target interval is fully covered. In the end, we obtain the total deployment delay $\Gamma = \sum{\textbf{T}}$. Note that we may only select a subset of UAVs with minimum total deployment delay to cover the target interval $[0, \beta]$ in final solution, since $2 \sum_{i=1}^{n}{r_{i}} \geq \beta$.

\begin{lemma}\label{fast}
{If all UAVs have the same flying speed and operating altitude, Algorithm~\ref{alg:commonsum} optimally solves the min-sum deployment problem when dispatching $n$ UAVs from the same location.}
\end{lemma}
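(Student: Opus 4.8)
The plan is to exploit the fact that, when all UAVs share the same initial location $x\le 0$, flying speed $v$ and altitude $h$, every UAV has the \emph{identical} cost function $g(y)=\frac{\sqrt{(y-x)^2+h^2}}{v}$, which is increasing in $y$ over the relevant range $y\ge x$ (in particular over $[0,\beta]$). Hence the cost of a deployment depends only on the multiset of final centers $\{y_i\}$, while the radii $r_i$ only determine how much of $[0,\beta]$ each chosen UAV covers. Minimizing $\sum_i T_i(y_i)$ thus reduces to the problem: select a subset $S$ and center positions so that the segments $[y_i-r_i,\,y_i+r_i]$ cover $[0,\beta]$, minimizing $\sum_{i\in S} g(y_i)$. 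I would then show that Algorithm~\ref{alg:commonsum} produces exactly the minimizer of this reduced problem.

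First I would put an arbitrary optimal solution into a canonical form. Because $g$ is increasing, I can slide every chosen UAV leftward (decreasing $y_i$) as far as coverage permits, which never increases cost. Processing the chosen UAVs from right to left, this pushes the rightmost one to right-edge exactly $\beta$ (center $\beta-r$) and makes consecutive coverage segments abut, yielding a contiguous tiling of $[0,\beta]$ from the right. Any UAV whose removal still leaves full coverage is redundant and can be deleted, strictly lowering the cost. After this reduction the selected UAVs, listed right to left with radii $r_{(1)},\dots,r_{(k)}$, sit at centers $y_{(j)}=\beta-2\sum_{l<j}r_{(l)}-r_{(j)}$.

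Next I would fix the selected set and determine the optimal left-to-right order of radii by an adjacent-swap argument. Swapping two neighbouring UAVs of radii $a>b$ so that the \emph{larger} one is on the right changes their two centers from $\{\beta'-b,\,\beta'-2b-a\}$ to $\{\beta'-a,\,\beta'-2a-b\}$ for the common right boundary $\beta'$, while leaving all other centers untouched; since $a>b$, both sorted centers move (weakly) left, so by monotonicity of $g$ the total cost does not increase. Repeating this bubble-sort step shows that arranging the chosen radii in \emph{decreasing} order from the right is optimal, which is precisely what Algorithm~\ref{alg:commonsum} does.

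Finally I would settle the choice of the subset itself by a second exchange. If an optimal canonical solution uses a UAV $\mu_p$ while some unused UAV $\mu_q$ has $r_q>r_p$, replacing $\mu_p$ by $\mu_q$ keeps coverage feasible, and re-tiling contiguously makes every order statistic of the radius list weakly larger; by the formula for $y_{(j)}$ each center then moves weakly left, so the cost does not increase (and surplus UAVs may become redundant and be dropped). Iterating drives the solution to use the largest radii available. Since covering an interval of length $\beta$ requires $2\sum_{i\in S}r_i\ge\beta$, the minimum feasible number of UAVs is $k=\min\{m:2\sum_{j=1}^{m} r_{[j]}\ge\beta\}$, attained by the $k$ globally largest radii, which is exactly the set Algorithm~\ref{alg:commonsum} selects. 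Hence greedy matches an optimal canonical solution and is therefore optimal. The step I expect to be the main obstacle is this last one: making the set-exchange rigorous while simultaneously accounting for optimal solutions that use a different \emph{number} of UAVs (so the one-for-one swap must be combined with redundancy removal), together with the minor boundary subtlety that the leftmost center may fall slightly left of $x$, where $g$ is no longer increasing and the slide-left reduction must instead slide toward $x$.
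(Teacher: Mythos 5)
Your proposal is correct and follows essentially the same route as the paper, whose entire proof is your adjacent-swap step (put the larger radius to the right; the UAVs in between shift left, so by monotonicity of the travel cost the total delay does not increase), presented as a figure-backed "greedy exchange." Your write-up is in fact more complete than the paper's, which omits both the canonicalization to a contiguous right-aligned tiling and the subset-selection exchange, and neither you nor the paper fully resolves the boundary case you flag — the leftmost deployed UAV's center landing to the left of $x$, where $g$ is no longer increasing and sliding toward $x$ rather than leftward is what is actually optimal — though this affects at most that one last UAV.
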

\begin{proof}
{See Appendix \ref{ap:fast}.}
\end{proof}

\begin{algorithm}[t]
\caption{Fast algorithm for dispatching UAVs from the same location}
\begin{algorithmic}[1]
\STATE \textbf{Input:}\\
$\textbf{U} =\{\mu_1,\mu_2, \ldots, \mu_n\}$ \\
\STATE \textbf{Output:}\\
$y_{i}$: final location of $\mu_{i}$ \\
$\Gamma$: total deployment delay

\STATE $\overline{\beta} = \beta, \textbf{U}^- = \textbf{U}, \textbf{T} = \emptyset$
\WHILE{$0 < \overline{\beta} \leq \beta$}
\STATE $\mu_j \gets \arg \max_{\mu_i \in \textbf{U}^-} r_i$
\STATE {$\textbf{T} \gets \textbf{T} \cup T_{j} = \frac{\sqrt{((\overline{\beta} - r_j)^2 + h_j^2)}}{v_j}$}
\STATE {$\overline{\beta} \gets \overline{\beta} - 2r_j$}
\STATE {$y_{j} \gets \overline{\beta} - r_j$ , \  $\textbf{U}^- \gets \textbf{U}^- \setminus \{\mu_j\}$}

\ENDWHILE
\RETURN {$\Gamma \gets \sum \textbf{T}$}
\end{algorithmic}
\label{alg:commonsum}
\end{algorithm}


\begin{proposition} \label{prop:samesum}
Let $\Gamma^*$ be the optimal total deployment delay of problem (\ref{sum}) when dispatching the UAVs from the same initial location. Algorithm~\ref{alg:commonsum} of computational complexity $O(n)$ can obtain the total deployment delay $\Gamma \leq \kappa \tau \Gamma^*$, where $\kappa = h_{max}/h_{min}$ and $\tau = v_{max}/v_{min}$.
\end{proposition}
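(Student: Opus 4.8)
The plan is to bound the greedy delay $\Gamma$ against the optimum $\Gamma^*$ by routing through a \emph{homogenized} instance in which every UAV is pretended to have the worst altitude $h_{max}$ and the worst speed $v_{min}$, exploiting the fact established in Lemma~\ref{fast} that the greedy-by-radius rule is \emph{exactly} optimal once altitudes and speeds are uniform. The crucial starting observation is that the coverage constraint in~(\ref{sum}) involves only the radii $r_i$ and the positions $y_i$, and that both the selection step (taking $\arg\max r_i$ over the still-available UAVs) and the placement step of Algorithm~\ref{alg:commonsum} depend solely on the radii. Consequently the \emph{coverage configuration} $G$ produced by the algorithm — the subset of UAVs it deploys together with their induced horizontal travel distances $p_i$ — is identical no matter what altitudes and speeds the UAVs actually carry; homogenization changes only the recorded delay values.

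First I would write the genuine greedy delay and bound it termwise by its homogenized counterpart. Taking WLOG the common launch point as the origin so that $T_i(y_i)=\sqrt{p_i^2+h_i^2}/v_i$, and using $v_i\ge v_{min}$ and $h_i\le h_{max}$, we get
\begin{equation}
\Gamma=\sum_{i\in G}\frac{\sqrt{p_i^2+h_i^2}}{v_i}\le\frac{1}{v_{min}}\sum_{i\in G}\sqrt{p_i^2+h_{max}^2}.
\end{equation}
Now comes the crux. Because $G$ is exactly what Algorithm~\ref{alg:commonsum} outputs, and because running that same algorithm on the homogenized instance (uniform altitude $h_{max}$, any uniform speed) yields precisely the same $G$, Lemma~\ref{fast} certifies that $G$ minimizes the homogenized total delay, hence — the constant $1/v$ factor being irrelevant to the minimizer — minimizes $\sum_{i\in S}\sqrt{p_i^2+h_{max}^2}$ over \emph{all} feasible configurations $(S,\{p_i\})$. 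Since the genuine optimum $O$ of~(\ref{sum}), with positions $z_i$, is itself feasible, this gives
\begin{equation}
\sum_{i\in G}\sqrt{p_i^2+h_{max}^2}\le\sum_{i\in O}\sqrt{z_i^2+h_{max}^2}.
\end{equation}

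To finish, I would translate the homogenized quantity evaluated at $O$ back into the genuine optimum $\Gamma^*=\sum_{i\in O}\sqrt{z_i^2+h_i^2}/v_i$. For each $i\in O$ factor
\begin{equation}
\frac{\sqrt{z_i^2+h_{max}^2}}{v_{min}}=\frac{v_i}{v_{min}}\cdot\frac{\sqrt{z_i^2+h_{max}^2}}{\sqrt{z_i^2+h_i^2}}\cdot\frac{\sqrt{z_i^2+h_i^2}}{v_i},
\end{equation}
then bound $v_i/v_{min}\le\tau$ and, using $h_i\ge h_{min}$ together with the elementary inequality $\sqrt{z^2+h_{max}^2}/\sqrt{z^2+h_{min}^2}\le h_{max}/h_{min}=\kappa$ (which follows by squaring and cancelling the common $z^2$ term), bound the middle factor by $\kappa$. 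This yields $\sqrt{z_i^2+h_{max}^2}/v_{min}\le\kappa\tau\,T_i(z_i)$ termwise; summing over $i\in O$ and chaining the three displays gives $\Gamma\le\kappa\tau\,\Gamma^*$. The $O(n)$ running time I would argue separately from the loop structure, since each UAV is selected and removed at most once.

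I expect the main obstacle to be the second step: justifying that the configuration produced by the greedy rule is optimal for the $h_{max}$-homogenized objective \emph{over every feasible configuration, including $O$}. This rests on two points that must be stated with care — that Algorithm~\ref{alg:commonsum}'s choices are altitude/speed-independent, so the output configuration is invariant under homogenization, and that Lemma~\ref{fast} delivers genuine global optimality rather than optimality only among tight contiguous placements. The two termwise comparisons and the $\kappa$-inequality are routine; it is pinning down this quantifier correctly that makes the $\kappa\tau$ factor emerge cleanly.
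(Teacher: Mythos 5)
Your proof is correct and follows essentially the same route as the paper's: homogenize the fleet's altitudes and speeds, invoke Lemma~\ref{fast} to certify optimality of the radius-only greedy configuration on a homogeneous instance, and absorb the heterogeneity into the factor $\kappa\tau$ via the elementary inequality $\sqrt{z^2+h_{max}^2}\le\kappa\sqrt{z^2+h_{min}^2}$. The only difference is organizational: the paper applies Lemma~\ref{fast} to the best-case homogenization $(h_{min},v_{max})$ and performs the termwise $\kappa\tau$ comparison over the greedy's own configuration, whereas you apply it to the worst-case homogenization $(h_{max},v_{min})$ and perform the termwise comparison over the optimum's configuration --- a mirror image of the same chain of inequalities.
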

\begin{proof}
Suppose that all the UAVs are fixed with the same altitude and flying speed, by applying Algorithm~\ref{alg:commonsum}, we can obtain the optimal solution of minimizing the total delay for dispatching UAVs from the same location by Lemma \ref{fast}.

Next, we assume that all the UAVs are fixed with the same altitude $h_{max} = \max h_i$ and flying speed $v_{min} = \min v_i$. Thus, on one hand, we can find the optimal solution by Algorithm~\ref{alg:commonsum} and obtain the total deployment delay $\Gamma_{max} = \frac{1}{v_{min}} \sum_{1 \leq i \leq n} \sqrt{y_i^2 + h_{max}^2}$. On the other hand, by applying the same algorithm, we find the total deployment delay $\Gamma_{min} = \frac{1}{v_{max}} \sum_{1 \leq i \leq n} \sqrt{y_i^2 + h_{min}^2}$ if all the UAVs are fixed with the same altitude $h_{min} = \min h_i$ and flying speed $v_{max} = \max v_i$. We can see that $\Gamma_{min} \leq \Gamma^*$, in which $\Gamma^*$ is the total deployment delay in the optimal solution. Moreover, since $\kappa = \frac{h_{max}}{h_{min}} \geq 1$ and $\tau = \frac{v_{max}}{v_{min}} \geq 1$, the following holds:

{\small
\setlength{\abovedisplayskip}{3pt}
\setlength{\belowdisplayskip}{1pt}
\begin{align*}
\Gamma_{max} &= \frac{1}{v_{min}} \sum \sqrt{y_i^2 + h_{max}^2} \\
  &           = \frac{\tau}{v_{max}} \sum \sqrt{y_i^2 + \kappa^2 h_{min}^2} \\
  &          \leq \kappa \tau \frac{1}{v_{max}} \sum \sqrt{y_i^2 +  h_{min}^2} \\
  &          \leq \kappa \tau \Gamma_{min}.
\end{align*}
}

The total deployment delay obtained by Algorithm~\ref{alg:commonsum} is $\Gamma$ and we have $\Gamma \leq \Gamma_{max}$. Thus, $\Gamma \leq \Gamma_{max} \leq \kappa \tau \Gamma_{min} \leq \kappa \tau \Gamma^*$.

With respect to the time complexity, we can see that there are at most $n$ iterations for the while loop, Algorithm~\ref{alg:commonsum} runs in linear time, which completes our proof.
\end{proof}
\noindent {Note that the minimum and maximum possible flying altitudes influence the computed total deployment delay. As the value of ratio $\frac{h_{max}}{h_{min}}$ (variance of flying altitudes) increases, the total deployment delay increases.} Algorithm~\ref{alg:commonsum} works in a greedy way based on the wireless coverage radius without considering the UAVs' diversity in operating altitude and flying speed. It has the advantage of low computational time. However, the gap between its obtained total deployment delay and the optimal one can be large if the variance of operating altitudes or flying speeds is large. To achieve a better performance, we can use the scheme with the pseudo-polynomial time algorithm developed in Section \ref{sec:dp}, which is designed for a more general setting of the min-sum problem.

\subsection{Reformulation of Problem (\ref{sum}) and bound analysis} \label{sec:differentsum}

We further study the general min-sum problem (\ref{sum}), when UAVs are dispatched from different locations. As in the min-max problem in Section~\ref{sec:iop}, here we add the order preserving constraint to make the analysis tractable. The problem is defined as follows:
{
\setlength{\abovedisplayskip}{2pt}
\setlength{\belowdisplayskip}{4pt}
\begin{align}
\label{IOPsum}
&~ \min_{\{y_1,\cdots, y_n\}} \underset{1 \leq i \leq n} {\sum} ~ {T}_i(y_i) ~ , \\
&~ {\rm s.t.}, \ [0,\beta] \subseteq  \bigcup_1^n [y_i - r_i, y_i + r_i],\notag \\
&~~~~~~~ \forall \ 1 \leq i \leq n-1, \  y_i \leq  y_{i+1}. \notag
\end{align}}
Note that the last inequality is due to the constraint of initial location order preserving given $x_1 \leq \cdots \leq x_n$.



It is still difficult to solve problem (\ref{IOPsum}) directly even under the constraint of order preserving, since there are still factorial number of combinations in solution. In the previous min-max optimization problem (\ref{IOP}), we use feasibility checking algorithm by assigning an identical deadline to all UAVs. However, it can not provide satisfactory solution for problem (\ref{IOPsum}), which targets at minimizing the sum of the deployment delays of selected UAVs in the solution. In spite of this, before presenting the optimal algorithm for problem (\ref{IOPsum}), we claim that we can still apply Algorithm~\ref{alg:fptas} for the new min-sum problem here to find a value $\Gamma^{\prime}$ that roughly approximates the optimal total deployment delay $\Gamma^*$. $\Gamma^{\prime}$ is the summation of all UAVs' delays obtained by Algorithm~\ref{alg:fptas}, which aims to minimize the maximum deployment delay. Next, we show the fact that the solution obtained by Algorithm~\ref{alg:fptas} can achieve an $n(1+\epsilon)$-approximation for problem (\ref{IOPsum}). Conversely, we can also show that the optimal solution of problem (\ref{IOPsum}) achieves an $n$-approximation for min-max problem (\ref{IOP}).



\begin{lemma}\label{bound}
$\Gamma^{*} \leq \Gamma^{\prime} \leq n(1+\epsilon) \Gamma^{*}$. Conversely, $T^* \leq T_{max}^* \leq n T^*$.
\end{lemma}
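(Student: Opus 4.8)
The plan is to derive both inequality chains from three elementary facts, with no recourse to the internal mechanics of the algorithms. First, problems (\ref{IOP}) and (\ref{IOPsum}) share exactly the same feasible region, namely full coverage of $[0,\beta]$ together with the order-preserving constraint $y_i \le y_{i+1}$; hence any deployment that is feasible for one is automatically feasible for the other, and an optimal solution of one problem may legitimately be scored under the other's objective. Second, for any collection of non-negative delays $\{T_i\}$ one has $\max_i T_i \le \sum_i T_i \le n\,\max_i T_i$. Third, $\Gamma^*$ minimizes the sum over this common region while $T^*$ minimizes the maximum. I would fix notation by letting $\{T_i^*\}$ denote the per-UAV delays of the optimal min-sum solution, so that $\Gamma^* = \sum_i T_i^*$ and $T_{max}^* = \max_i T_i^*$.

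I would prove the second statement first, since it yields an auxiliary bound used afterwards. The goal is the single chain $T^* \le T_{max}^* \le \Gamma^* \le n\,T^*$. The left inequality holds because the optimal min-sum solution is feasible for (\ref{IOP}), so its maximum delay $T_{max}^*$ cannot be smaller than the optimal min-max value $T^*$. The middle inequality $T_{max}^* \le \Gamma^*$ is simply $\max \le \sum$ applied to $\{T_i^*\}$. For the right inequality, I would take the optimal min-max solution, whose maximum delay equals $T^*$; by $\sum \le n\,\max$ its total delay is at most $n\,T^*$, and since that solution is feasible for (\ref{IOPsum}), optimality of $\Gamma^*$ forces $\Gamma^* \le n\,T^*$. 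Reading the chain at its two ends gives $T^* \le T_{max}^* \le n\,T^*$ as required, and it incidentally records the inequality $T^* \le \Gamma^*$, which I reuse below.

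For the first statement I would use that $\Gamma'$ is the total delay of the (feasible) deployment returned by Algorithm~\ref{alg:fptas}, with $T$ its maximum delay. The lower bound $\Gamma^* \le \Gamma'$ is immediate from optimality of $\Gamma^*$, because the FPTAS output is a feasible solution of (\ref{IOPsum}). For the upper bound, $\sum \le n\,\max$ gives $\Gamma' \le n\,T$; the approximation guarantee of Theorem~\ref{thrm:FPTAS} gives $T \le (1+\epsilon)T^*$; and the already-established $T^* \le \Gamma^*$ then yields $\Gamma' \le n(1+\epsilon)T^* \le n(1+\epsilon)\Gamma^*$, completing $\Gamma^* \le \Gamma' \le n(1+\epsilon)\Gamma^*$.

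I do not anticipate a genuine obstacle: the argument is pure bookkeeping with the $\max / \sum / n\max$ inequalities and the optimality definitions, and the only external input is the $(1+\epsilon)$ ratio of Theorem~\ref{thrm:FPTAS}. The one point that deserves an explicit sentence is that (\ref{IOP}) and (\ref{IOPsum}) have a common feasible region, so that the optimum of each problem can be validly substituted into the other objective; once that is stated, every remaining step is a direct substitution.
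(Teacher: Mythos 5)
Your proof is correct and follows essentially the same route as the paper's: both rest on the chain $\max_i T_i \le \sum_i T_i \le n\max_i T_i$, the fact that problems (\ref{IOP}) and (\ref{IOPsum}) share a feasible region so optima can be cross-evaluated, and the $(1+\epsilon)$ guarantee of Theorem~\ref{thrm:FPTAS} to get $\Gamma' \le nT \le n(1+\epsilon)T^* \le n(1+\epsilon)\Gamma^*$. Your write-up is in fact slightly more careful than the paper's in justifying the two cross-feasibility steps ($T^* \le T_{max}^*$ and $\Gamma^* \le nT^*$), which the paper states somewhat loosely.
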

\begin{proof}
For any instance of problem (\ref{IOPsum}), we have $\Gamma^{*} = \sum_{1 \leq i \leq n} T_i^*$, and $T_{max}^*$ is the maximum one among all ${T_i^*}'s$ as shown in Table~\ref{tab:notation}. Then, we have $T_{max}^* \leq \Gamma^{*}$. As $T^*$ is the minimum maximum deployment delay of problem (\ref{IOP}), we have $T^* \leq T_{max}^*$ since the maximum deployment delay in problem (\ref{IOP}) is not lower than the maximum deployment delay in problem (\ref{IOPsum}). Since $T$ is obtained by Algorithm~\ref{alg:fptas}, the following holds:

{
\setlength{\abovedisplayskip}{3pt}
\begin{align*}
\Gamma^{\prime} & \leq  n \cdot T \leq n(1 + \epsilon) T^* \leq n(1 + \epsilon) T_{max}^* \leq n(1 + \epsilon) \Gamma^{*}
\end{align*}
}
Since $\Gamma^{*} \leq  \Gamma^{\prime}$, we conclude $\Gamma^{*} \leq \Gamma^{\prime} \leq n(1+\epsilon) \Gamma^{*}$.

Conversely, $T_{max}^*$ is the maximum deployment delay among all UAVs obtained by the optimal algorithm for problem (\ref{IOPsum}). Then, we have $T_{max}^* \leq \Gamma^{*}$, where $\Gamma^{*}$ is the optimal total deployment delay for problem (\ref{IOPsum}). We conclude $\Gamma^{*} \leq n \cdot T^*$, since the total deployment delay in problem (\ref{IOP}) is not lower than the total deployment delay in problem (\ref{IOPsum}). Thus, we obtain $T^* \leq T_{max}^* \leq n T^*$.
\end{proof}

\subsection{Dynamic programming for solving problem (\ref{IOPsum})} \label{sec:dp}

Different from the min-max optimization problem (\ref{IOP}), the feasibility checking algorithm by assigning an identical deadline to all UAVs can not provide satisfactory solution for problem (\ref{IOPsum}). Because problem (\ref{IOPsum}) is to compute the optimal configuration of the UAV network to coordinately minimize the sum of the deployment delays of selected UAVs in the configuration. Given the order $\propto$ defined in Section~\ref{sec:iop}, we present a dynamic programming approach for solving the problem (\ref{IOPsum}), which starts with the leftmost point in $[0,\beta]$ and sequentially dispatch the UAVs one by one according to $\propto$.

For the leftmost $i$ UAVs $\mu_1, \mu_2, \ldots, \mu_i$ and any given delay $j > 0$, we use $[0, R(i, j)]$ to denote the left-aligned interval covered by using only the leftmost $i$ UAVs within total deployment delay $j$. The initial value of $R(0, j) = 0$ and $R(i, 0) = 0$. If we want to cover the longest left-aligned interval with the leftmost $i$ UAVs (i.e., $\{\mu_{1}, \ldots, \mu_{i}\}$) and total deployment delay $j$, then we may or may not use UAV $\mu_{i}$. We are using the following recurrence to capture the idea that either the solution witnessing the left-aligned covered interval $R(i, j)$ uses $\mu_i$ and how much time $t$ is spent from $j$ in moving $\mu_i$ or else it does not ($R(i, j) = R(i-1, j)$).

If we do not use $\mu_i$, i.e., UAV $\mu_i$ can not be used to extend the current left-aligned covered interval within $T_i = t$, where $t$ is denoted as the time budget for UAV $\mu_i$. The longest left-aligned interval can be covered is $R(i, j) = R(i-1, j)$. In the other case where we do use $\mu_{i}$, the total deployment delay can be divided into two parts, i.e., $j-t$ and $t$, where $t$ is the delay of UAV $\mu_i$, and $j-t$ is the delay of the remaining $i-1$ UAVs. $t$ is feasible for $\mu_i$ if it allows $\mu_i$ to fly up vertically to $h_i$ at least, i.e., $(v_i t)^2 - h_i^2 \geq 0$. In the following, we use $\Delta_{(i,t)} > 0$ to denote $(v_i t)^2 - h_i^2$, then $\sqrt{\Delta_{(i,t)}}$ is the horizontal distance that UAV $\mu_i$ can move with delay $t$. By computing each time budget $t \in \{1, \ldots, j\}$ for moving UAV $\mu_i$, we select the one (best $t$ if it exists) that maximizing the left-aligned covered interval by using the leftmost $i$ UAVs $\mu_1, \mu_2, \ldots, \mu_i$ with total delay $j$. We have the following three cases that can possibly extend the currently covered left-aligned interval $R(i-1, j-t)$ depending on the relative initial position $x_i$ of $\mu_i$ and $R(i-1, j-t)$. Note that the currently covered left-aligned interval can not be extended in the cases that $x_i - r_i - \sqrt{\Delta_{(i,t)}} > R(i-1, j-t)$ and $x_i+r_i + \sqrt{\Delta_{(i,t)}} < R(i-1, j-t)$.

\begin{figure}[t]
    \centering
        \includegraphics[width=0.75\textwidth]{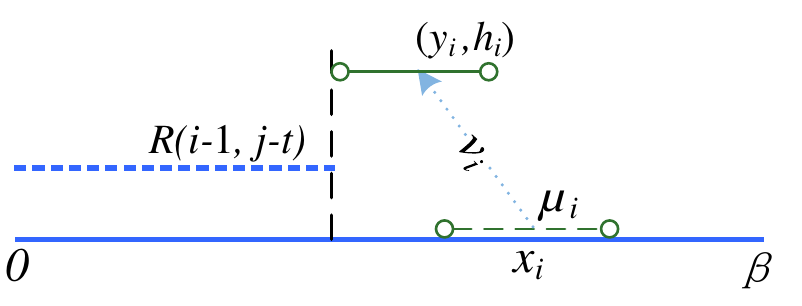}
    \caption{Case 2: deploying $\mu_i$ to the left to seamlessly cover from $R(i-1, j-t)$ where $x_i - r_i - \sqrt{\Delta_{(i,t)}} < R(i-1, j-t) < x_i - r_i$.}
    \label{fig:case2}
\end{figure}

\begin{figure}[t]
    \centering
        \includegraphics[width=0.75\textwidth]{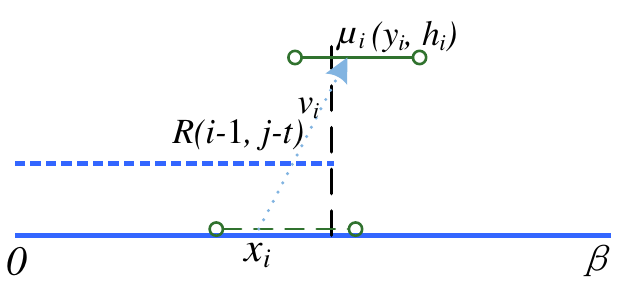}
    \caption{Case 3: deploying $\mu_i$ to the rightmost position where $x_i-r_i \leq R(i-1, j-t) < x_i+r_i + \sqrt{\Delta_{(i,t)}}$.}
    \label{fig:case3}
\end{figure}

\begin{itemize}
\item Case 1: If $\Delta_{(i, t)} < 0$, then we do not use UAV $\mu_i$ to cover the target interval. We have $R(i, j) = R(i-1, j)$. Otherwise, we have only the following two cases of using UAV $\mu_i$.
\item Case 2: If $x_i - r_i - \sqrt{\Delta_{(i,t)}} < R(i-1, j-t) < x_i - r_i$, as shown in Figure~\ref{fig:case2}, UAV $\mu_i$ can seamlessly cover from $R(i-1, j-t)$ and the new covered interval can be extended to $R^{\prime} = R(i-1, j-t) + 2r_i$. The new position of $\mu_i$ is $( y_i = R(i-1, j-t) + r_i, h_i)$.
\item Case 3: If $x_i-r_i \leq R(i-1, j-t) < x_i+r_i + \sqrt{\Delta_{(i,t)}}$, as shown in Figure~\ref{fig:case3}, the new covered interval can be extended to $R^{\prime} = \min \{ x_i+r_i + \sqrt{\Delta_{(i,t)}}, R(i-1, j-t) + 2r_i \}$. The new position of $\mu_i$ is $(R^{\prime} - r_i, h_i)$.
\end{itemize}

Moreover, if $R^{\prime} > R(i-1, j)$, then $R(i, j) = R^{\prime}$. Otherwise, $R(i, j) = R(i-1, j)$, i.e, $\mu_i$ will not be used. We can see that $R(i, j)$ is the longest left-aligned interval covered by the leftmost $i$ UAV within delay $j$.

The optimal total deployment delay for reaching full coverage of $L$ by using $n$ UAVs is as follows.
{
\setlength{\abovedisplayskip}{3pt}
\setlength{\belowdisplayskip}{1pt}
\begin{align}
\label{optdelay}
\Gamma^* = \min_{~\Gamma \geq 0}\{\Gamma~|~ R(n, \Gamma) \geq \beta \}.
\end{align}
}

The dynamic programming is given in Algorithm~\ref{alg:dpalg}, where we check the upper bound of $\Gamma$ (denoted as $\Gamma_u$) in problem (\ref{IOPsum}) to help search for the global optimum. For any UAV $\mu_i$, the maximum possible moving distance of $\mu_i$ is to reach the furthest position $(0, h_i)$ or $(\beta, h_i)$. Thus, $\Gamma$ to summarize all UAVs is loosely bounded by
{
\setlength{\abovedisplayskip}{3pt}
\setlength{\belowdisplayskip}{1pt}
\begin{align}
\label{eq:Tupp}
\Gamma_u = \sum_{1 \leq i \leq n} \{\frac{\max \{\sqrt{(\beta - x_i)^2 + h_i^2}, \sqrt{x_i^2 + h_i^2}\}}{v_i}\}.
\end{align}
}

\begin{algorithm}[t]
\caption{Dynamic programming for problem (\ref{IOPsum})}
\begin{algorithmic}[1]

\STATE \textbf{Input:}\\ 
$\textbf{U} =\{\mu_1,\mu_2, \ldots, \mu_n\}$\\

\STATE \textbf{Output:}\\ 
$\Gamma^{\prime\prime}$: total deployment delay

\FOR { $i = 0$ to $n$}
\FOR { $j = 0$ to $\Gamma_u$}
    \STATE {$R[i, j] \gets 0$}
\ENDFOR
\ENDFOR

\FOR { $i = 1$ to $n$}
\FOR { $j = 1$ to $\Gamma_u$}
\STATE {$R^{\prime} \gets 0$}
\FOR {$t = 1$ to $j$}
    \IF {$\Delta_{(i, t)} < 0$}
    \STATE {$R^{\prime} \gets R[i-1, j]$}
    \STATE {continue;}
        \ELSIF {$x_i - r_i - \sqrt{\Delta_{(i,t)}} < R(i-1, j-t) < x_i - r_i$}
	    \STATE {$R^{\prime} = R(i-1, j-t) + 2r_i$}
	    \STATE {$y_{i} \gets R(i-1, j-t) + r_i$}
            \ELSIF {$x_i-r_i \leq R(i-1, j-t) < x_i+r_i + \sqrt{\Delta_{(i,t)}}$}
	        \STATE {$R^{\prime} = \min \{ x_i+r_i + \sqrt{\Delta_{(i,t)}}, R(i-1, j-t) + 2r_i \}$}
	        \STATE {$y_{i} \gets R^{\prime} - r_i$}
                \ELSE
	            \STATE {$R^{\prime} \gets R[i-1, j]$}
    \ENDIF
    \ENDFOR
    \STATE {$R[i, j] \gets \max\{R[i-1, j], R^{\prime} \}$}
\ENDFOR
\ENDFOR
\RETURN $\Gamma^{\prime\prime} \gets \min\{j | 1 \leq j \leq  \Gamma_u, R(n,j) \geq \beta \}$\\
\end{algorithmic}
\label{alg:dpalg}
\end{algorithm}

The dynamic programming terminates with a table, whose $(i, j)$ entry records the value of $R(i, j)$. Each entry can be computed in constant time. To get the optimal solution, the whole table can be computed in $O(n \Gamma_u^2)$ time in worst case since $\Gamma \leq \Gamma_u$. Because $\Gamma_u$ may not be bounded by a polynomial of $n$, Algorithm~\ref{alg:dpalg} runs in pseudo-polynomial time.

\begin{theorem}\label{thrm:dppsudo}
Algorithm~\ref{alg:dpalg} returns the optimum of problem (\ref{IOPsum}) in pseudo-polynomial time.
\end{theorem}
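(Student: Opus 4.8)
The statement has two components — correctness (Algorithm~\ref{alg:dpalg} returns the optimum of (\ref{IOPsum})) and complexity (it does so in pseudo-polynomial time) — and I would treat them in that order.

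\emph{Correctness.} The plan is to prove by induction on $i$ the invariant that the table value $R(i,j)$ produced by the recurrence equals the largest length $\ell$ such that the prefix $[0,\ell]$ is fully covered by some order-preserving subset of $\{\mu_1,\ldots,\mu_i\}$ with total deployment delay at most $j$. Before the induction I would record a left-alignment/exchange lemma: in any feasible deployment of (\ref{IOPsum}) one may assume without loss of generality that the union of covered intervals is a single prefix $[0,\ell]$ and that, reading the used UAVs in the order $\propto$, each one abuts on its left the right boundary of the coverage contributed by the lower-indexed used UAVs. This is exactly where the order constraint $y_i \le y_{i+1}$ from Section~\ref{sec:iop} does the work: since $\mu_i$ has the largest index among $\{\mu_1,\ldots,\mu_i\}$, its final position is the rightmost, so it can only extend coverage to the right of whatever the first $i-1$ UAVs cover, and the problem decomposes into ``prefix coverage by $\mu_1,\ldots,\mu_{i-1}$'' plus ``a rightward extension by $\mu_i$.''

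For the inductive step I would fix $i,j$ and split on whether an optimal prefix-covering uses $\mu_i$. If not, the best achievable is $R(i-1,j)$ by the induction hypothesis. If it does, with delay $t$ spent on $\mu_i$, then the remaining budget $j-t$ is left for $\mu_1,\ldots,\mu_{i-1}$, which by the lemma cover precisely a prefix $[0,R(i-1,j-t)]$, and $\mu_i$ must extend seamlessly from $R(i-1,j-t)$. The geometric heart is to check that the three cases compute the furthest right endpoint $\mu_i$ can reach: with delay $t$ its horizontal reach from $x_i$ is $\sqrt{\Delta_{(i,t)}}$, so its coverable interval is $[x_i-r_i-\sqrt{\Delta_{(i,t)}},\, x_i+r_i+\sqrt{\Delta_{(i,t)}}]$; to leave no gap its left boundary can be at most $R(i-1,j-t)$, and to maximize the new right boundary one pushes $\mu_i$ as far right as possible. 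This yields $R'=R(i-1,j-t)+2r_i$ when $\mu_i$ can be slid to just touch $R(i-1,j-t)$ (Case 2, Figure~\ref{fig:case2}) and $R'=\min\{x_i+r_i+\sqrt{\Delta_{(i,t)}},\,R(i-1,j-t)+2r_i\}$ when the limit is $\mu_i$'s own reach (Case 3, Figure~\ref{fig:case3}). Taking the maximum over $t\in\{1,\ldots,j\}$ and against $R(i-1,j)$ reproduces the update, closing the induction. Equation (\ref{optdelay}) then reads off $\Gamma^*$ as the least budget at which $R(n,\cdot)$ first reaches $\beta$, which is exactly what the return statement of Algorithm~\ref{alg:dpalg} outputs.

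\emph{Complexity.} The table $R[\cdot,\cdot]$ has $(n+1)(\Gamma_u+1)$ entries, and computing entry $(i,j)$ scans $t$ from $1$ to $j\le\Gamma_u$ with $O(1)$ arithmetic per value, giving $O(n\Gamma_u^2)$ in total. Since the bound $\Gamma_u$ in (\ref{eq:Tupp}) is governed by the geometric magnitudes $\beta$, $x_i$, $h_i$ and $1/v_i$ rather than by $n$, it is not bounded by any polynomial in $n$, so the running time is pseudo-polynomial.

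I expect the main obstacle to be the left-alignment/exchange argument and its interaction with order preservation: one must argue rigorously that restricting attention to prefix coverage and to the ``abut the previous boundary, then push right'' placement loses no optimality, and that this restriction is consistent with $y_i\le y_{i+1}$, together with the standard cut-and-paste step ensuring the sub-solution on $\mu_1,\ldots,\mu_{i-1}$ is itself optimal for its budget. A secondary subtlety is that the recurrence implicitly discretizes the delay budget $j$ onto an integer grid of size $\Gamma_u$, so optimality is strictly with respect to that grid.
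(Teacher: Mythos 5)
Your proposal follows essentially the same route as the paper's proof: establish by induction that $R(i,j)$ is the longest left-aligned interval coverable by the first $i$ UAVs within total delay $j$ (casing on whether $\mu_i$ is used and where it can seamlessly extend the prefix), then read off $\Gamma^*$ as the least $j$ with $R(n,j)\ge\beta$ and count the three nested loops to get $O(n\Gamma_u^2)$. You are in fact more explicit than the paper about the two delicate points --- the exchange/monotonicity step justifying that the first $i-1$ UAVs may be assumed to cover their maximal prefix, and the fact that the time budget $t$ is discretized to an integer grid --- both of which the paper's proof passes over silently.
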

\begin{proof}
We first show that the computed solution of total deployment delay $\Gamma^{\prime\prime}$ is feasible. We know that the final locations of UAVs in the computed solution follows order preserving. Moreover, the algorithm does not terminate until $R(n, \Gamma) \geq \beta$, then the target interval is fully covered. Thus, the solution of $\Gamma^{\prime\prime}$ output by Algorithm~\ref{alg:dpalg} is feasible.

It remains to show that $R(i, j)$ is the longest left-aligned interval covered by the leftmost $i$ UAVs within total delay $j$. Considering the optimal solution for $R(i, j)$, UAV $\mu_i$ is either dispatched or not. If not, then we have the same interval $R(i-1,j)$ covered by UAVs $\mu_{1},\ldots,\mu_{i-1}$. Alternatively, UAV $\mu_i$ is dispatched for $R(i, j) = R(i-1, j-t) + 2r_i$ or $R(i, j) = x_i+r_i + \sqrt{\Delta_{(i,t)}}$ as Figures~\ref{fig:case2} and~\ref{fig:case3}, respectively. In each case, $R(i, j)$ is maximized. 

There are three levels of $for$ loops in Line 8 ($n$ loops), Line 9 ($\Gamma_u$) and Line 11 ($\Gamma_u$). Thus, Algorithm~\ref{alg:dpalg} runs in time $O(n \Gamma_u^2)$, which is pseudo-polynomial time.
\end{proof}

By Lemma~\ref{bound} and Theorem \ref{thrm:dppsudo}, the Corollary \ref{clr} holds.
\begin{corollary}\label{clr}
Algorithm~\ref{alg:dpalg} can also attain an $n$-approximation for the min-max problem (\ref{IOP}).
\end{corollary}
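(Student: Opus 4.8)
The plan is to exploit the fact that problems (\ref{IOP}) and (\ref{IOPsum}) share exactly the same feasible region --- full coverage of $[0,\beta]$ together with the order-preserving constraint $y_i \le y_{i+1}$ --- and differ only in their objective (min-max versus min-sum). Consequently, any deployment that is optimal for one of the two problems is automatically a \emph{feasible} candidate for the other. First I would invoke Theorem~\ref{thrm:dppsudo} to assert that Algorithm~\ref{alg:dpalg} returns a deployment attaining the min-sum optimum $\Gamma^*$, with individual UAV delays $T_1^*, \ldots, T_n^*$ satisfying $\Gamma^* = \sum_{1 \le i \le n} T_i^*$.

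Next I would evaluate this \emph{same} deployment under the min-max objective of problem (\ref{IOP}). Its min-max value is precisely $T_{max}^* = \max_{1 \le i \le n} T_i^*$, the largest individual delay in the min-sum-optimal configuration. It then suffices to invoke the converse inequality already established in Lemma~\ref{bound}, namely $T^* \le T_{max}^* \le n\,T^*$. The upper bound $T_{max}^* \le n\,T^*$ is what certifies the approximation guarantee: the configuration produced by Algorithm~\ref{alg:dpalg}, when judged by the min-max criterion, has objective value at most $n$ times the optimal min-max delay $T^*$, which is exactly the claimed $n$-approximation for problem (\ref{IOP}).

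Since Lemma~\ref{bound} carries out all the quantitative work, the corollary is essentially immediate and there is no serious obstacle. The only point that warrants a sentence of care is re-deriving why $T_{max}^* \le n\,T^*$: the maximum is bounded by the sum, $T_{max}^* \le \Gamma^*$; and the min-sum optimum $\Gamma^*$ is no larger than the total delay of the min-max-optimal deployment, which is itself at most $n\,T^*$ because each of its $n$ UAVs incurs a delay of at most $T^*$. Chaining these gives $T_{max}^* \le \Gamma^* \le n\,T^*$, and the corollary follows directly.
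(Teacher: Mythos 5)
Your proposal is correct and follows essentially the same route as the paper, which simply observes that the corollary follows by combining Theorem~\ref{thrm:dppsudo} (Algorithm~\ref{alg:dpalg} is min-sum optimal) with the converse chain $T^* \le T_{max}^* \le \Gamma^* \le n\,T^*$ from Lemma~\ref{bound}. Your explicit re-derivation of $T_{max}^* \le n\,T^*$ matches the paper's own argument inside Lemma~\ref{bound}, so nothing is missing.
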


\section{Performance Evaluation} \label{sec:exp}
In this section, we conduct simulations to evaluate the performances of our proposed UAV deployment algorithms. All the values reported later are collected from the average of 1000 runs for each algorithm. All UAVs are initially deployed randomly and the width $d$ is set as a small constant. The coverage radius, operating altitude and flying speed are randomly generated, while the length of target interval $L$, maximum flying speed, and maximum coverage radius are set to be 20 kilometers (km), 50 km per hour, and 3 km, unless otherwise stated. {We set the users' SNR threshold  $\gamma_{th}$ to be $10~dB$ as in \cite{mozaffari2015drone}, which determines the target data rate of each ground user (see Section \ref{sec:sys}).}

\begin{figure}[t]
    \centering
        \includegraphics[width=0.75\textwidth]{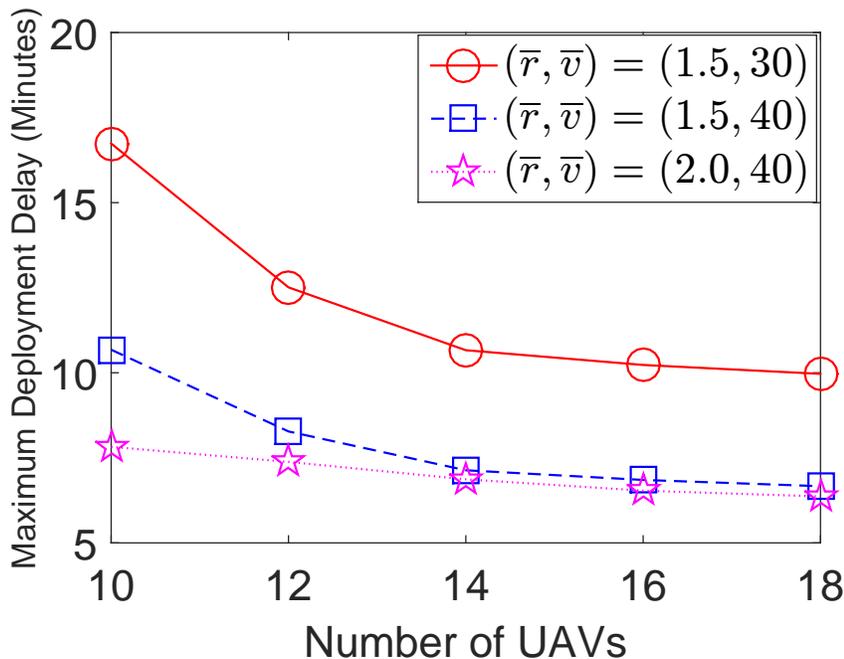}
    \caption{{The optimal deployment delay, the number of UAVs versus $\overline{r}$ (km) and $\overline{v}$ (km per hours) of all UAVs.}}
    \label{fig:same_mean}
\end{figure}

\subsection{Optimizing maximum deployment delay}\label{sec:globalexp}
In this section, we first present the experimental results for optimizing the maximum deployment delay to dispatch the UAVs in a fair manner.

\subsubsection{Dispatching of UAVs from the same location in problem (\ref{general})}\label{sec:expsame}

We first present the simulation results of Algorithm~\ref{alg:common} when dispatching the UAVs from the same location. Figure~\ref{fig:same_mean} shows the optimal deployment delay as a function of the number of UAVs under different mean values of coverage radius ($\overline{r} =\frac{1}{n}\sum_{i=1}^n r_i$) and flying speed ($\overline{v}=\frac{1}{n}\sum_{i=1}^n v_i$).

By increasing $\overline{v}$ or $\overline{r}$, the deployment delay decreases. Note that larger coverage radius (flying speed) of UAVs helps save the moving distance (time) to cover the whole target interval. By increasing the number of UAVs, the deployment delay decreases due to the increased UAV diversity and the flexibility to sample better UAVs. Still, there is a converging trend of the deployment delay with the increase of UAV number. Therefore, depending on the size of the target area and potential size of UAVs, an appropriate number of UAVs needs to be selected and deployed.

\subsubsection{Dispatching of UAVs from different locations in problem (\ref{IOP})}\label{sec:expdif}

\begin{figure}[t]
    \centering
    \includegraphics[width=0.75\textwidth]{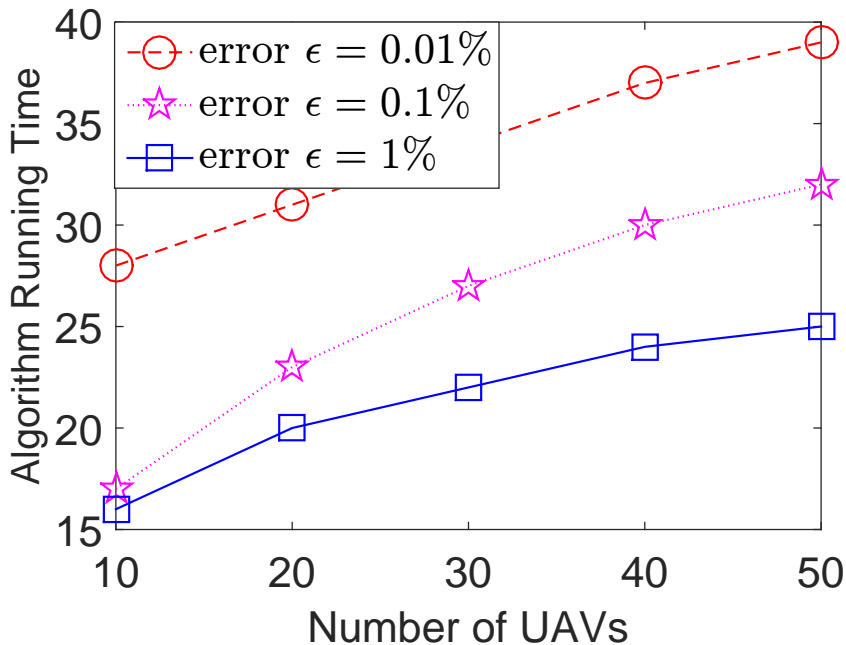}
    \caption{The running time (in milliseconds) of our approximation algorithms with different values of relative error $\epsilon$.}
\label{fig:op_alg_time}
\end{figure}

By running FPTAS in Algorithm \ref{alg:fptas}, we first show the time complexity for solving problem (\ref{IOP}). In Figure~\ref{fig:op_alg_time}, we show the running time of our approximation algorithms under different values of $\epsilon$, i.e., $1\%$, $0.1\%$ and $0.01\%$. It is observed that the smaller value of $\epsilon$ is, the larger running time is required. In addition, as the number of UAVs increases, the running time is concavely increasing, which is actually much smaller than the theoretical bound $O(n^2)$ in Theorem \ref{thrm:FPTAS}. This is because as the increase of the number of UAVs $n$ while the length of the line interval is fixed, Algorithm~\ref{alg:decalg} may not need to compute all the UAVs in Line 5.

Figure~\ref{fig:iop} shows the difference of the maximum deployment delay between problem (\ref{IOP}) and the original problem (\ref{general}). Actually, it examines the performance loss due to the proposal of preserving UAVs' initial locations for Algorithm \ref{alg:fptas}. In this figure, problem (\ref{IOP}) is solved by our proposed FPTAS by setting $\epsilon = 0.1\%$ and $0.001\%$, while problem (\ref{general}) is solved optimally by Brute-Force algorithm despite the high complexity. The generated flying speed and coverage radius are uniformly distributed, and the minimum coverage radius is set to be 4 kilometers to guarantee the full coverage. It can be observed that our proposed FPTAS for solving the reformulated problem (\ref{IOP}) can obtain a close deployment delay when comparing to the optimal deployment delay (obtained by Brute-Force algorithm for problem (\ref{general})). The gap does not necessarily increase with the number of UAVs, as our FPTAS greatly benefits from more UAVs.

\begin{figure}[t]
    \centering
    \includegraphics[width=0.75\textwidth]{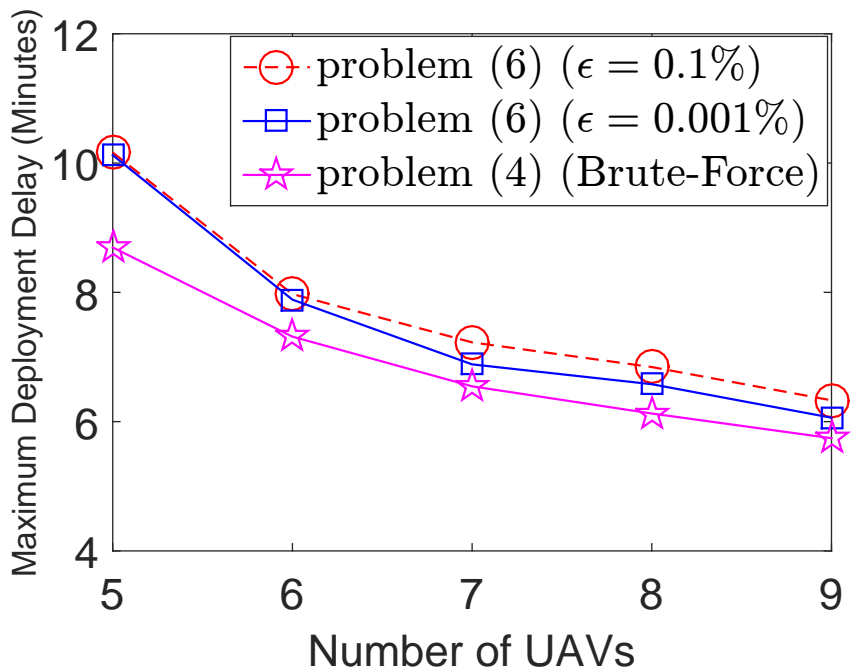}
    \caption{The deployment delays obtained by FPTAS for problem (\ref{IOP}) and Brute-Force algorithm for original problem (\ref{general}).}
    \label{fig:iop}
\end{figure}

\subsection{Optimizing total deployment delay}\label{sec:overallexp}
In this section, we further present the evaluation on algorithms for efficiently optimizing the total deployment delay for covering the target interval.

\begin{figure}[t]
    \centering
        \includegraphics[width=0.75\textwidth]{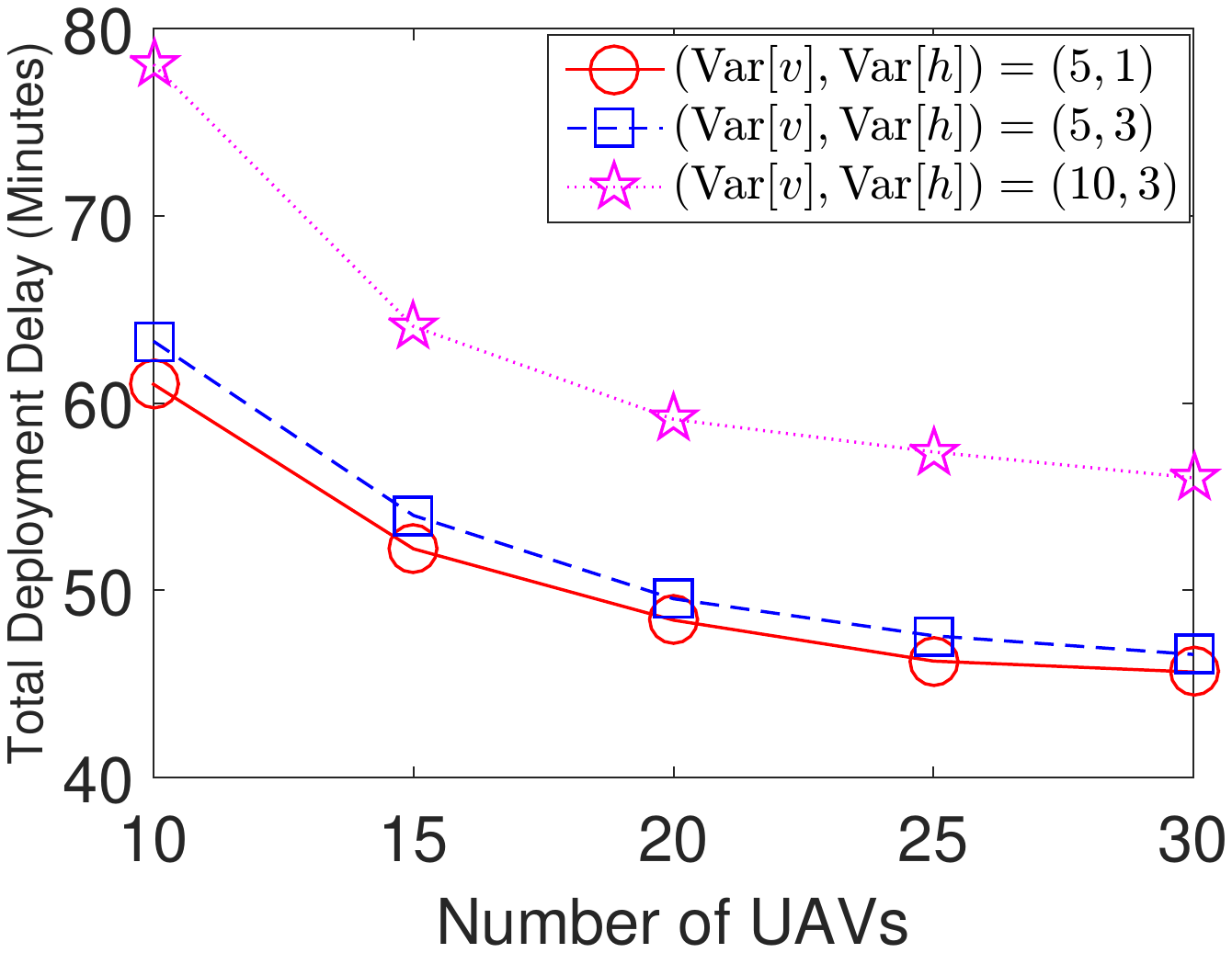}
    \caption{{The optimal deployment delay versus the number of UAVs versus $\mathrm{Var}[v]$ and $\mathrm{Var}[h]$ of all UAVs.}}
    \label{fig:same_overall}
\end{figure}

\subsubsection{Dispatching of UAVs from the same location in problem (\ref{sum})}\label{sec:overallsame}
We present the simulation results of Algorithm~\ref{alg:commonsum} when dispatching the UAVs from the same location. In this experiment, we set mean $\overline{v}$ as 40 km per hour, $\overline{r}$ as 2.0 kilometer, $\overline{h}$ as 5.0 kilometer. Figure~\ref{fig:same_overall} shows the optimal the total deployment delay as a function of the number of UAVs under different variances of flying speed $\mathrm{Var}[v]$ and operating altitude $\mathrm{Var}[h]$.

By increasing $\mathrm{Var}[v]$ or $\mathrm{Var}[h]$, the deployment delay increases, because $\kappa = h_{max}/h_{min}$ and $\tau = v_{max}/v_{min}$ become larger. This is consistent with Proposition \ref{prop:samesum}. By increasing the number of UAVs, the deployment delay decreases due to the increased UAV diversity and the flexibility to sample more appropriate UAVs. The influence of variance of $h_i$ is relative minor compared to variance of $v_i$ since the final moving distance is determined by both horizontal distance and operating altitude. Still, there is a converging trend of the deployment delay with the increase of UAV number.

\subsubsection{Dispatching of UAVs from different locations in problem (\ref{IOPsum})}\label{sec:overalldiff}

Similar to the results in Figure~\ref{fig:iop}, we can show that Algorithm~\ref{alg:dpalg} introduces only small performance loss due to the constraint of preserving UAVs' initial locations. Next we compare the performance between Algorithm~\ref{alg:fptas} (providing $n(1+\epsilon)$-approximation in Lemma~\ref{bound}) and Algorithm~\ref{alg:dpalg} (optimal for the min-sum problem) for total deployment delay minimization problem (\ref{IOPsum}).

\begin{figure}[t]
    \centering
        \includegraphics[width=0.75\textwidth]{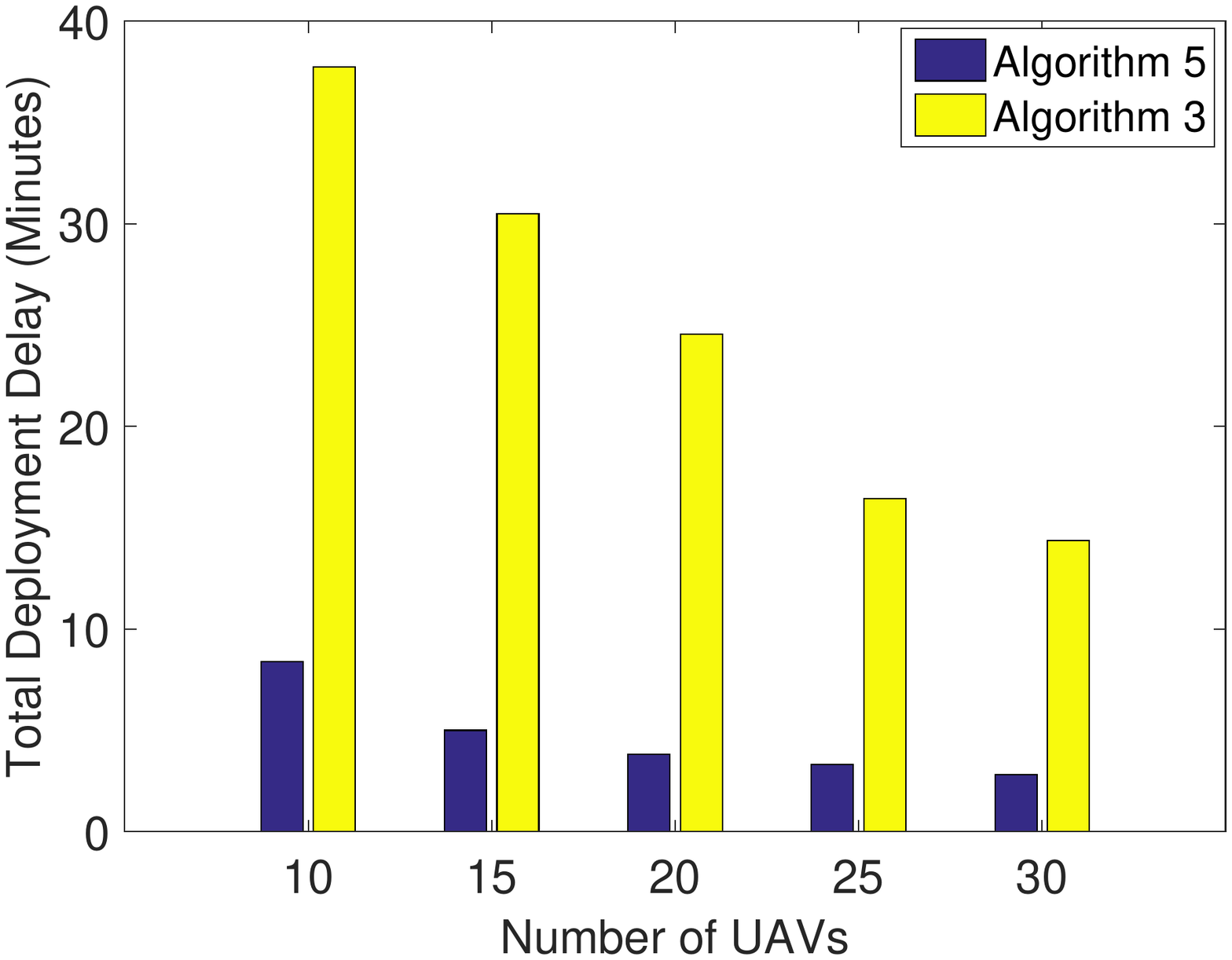}
    \caption{{The total deployment delay comparison between Algorithm~\ref{alg:fptas} and Algorithm~\ref{alg:dpalg}.}}
    \label{fig:dif_overall30}
\end{figure}

In Figure~\ref{fig:dif_overall30}, problem (\ref{IOPsum}) is solved by both Algorithm~\ref{alg:dpalg} and Algorithm~\ref{alg:fptas} with $\epsilon = 0.01\%$. Since Algorithm~\ref{alg:dpalg} provides the optimal solutions for the min-sum design purpose, it always obtains lower total deployment delays than Algorithm~\ref{alg:fptas} (for min-max design purpose) in Figure~\ref{fig:dif_overall30}. However, Algorithm~\ref{alg:dpalg} (pseudo-polynomial time) needs more computational time than Algorithm~\ref{alg:fptas} (in $O(n^2 \log \frac{1}{\epsilon})$). By increasing the number of UAVs, the deployment delays obtained by both algorithms decreases due to the UAV diversity gain. Figure~\ref{fig:dif_overall30} tells that minimizing the maximum deployment delay can imply a significant increase in total deployment delay. However, the empirical performance of Algorithm~\ref{alg:fptas} is better than the worst-case theoretical upper bounds indicated in Lemma~\ref{bound}.

\begin{figure}[t]
    \centering
        \includegraphics[width=0.75\textwidth]{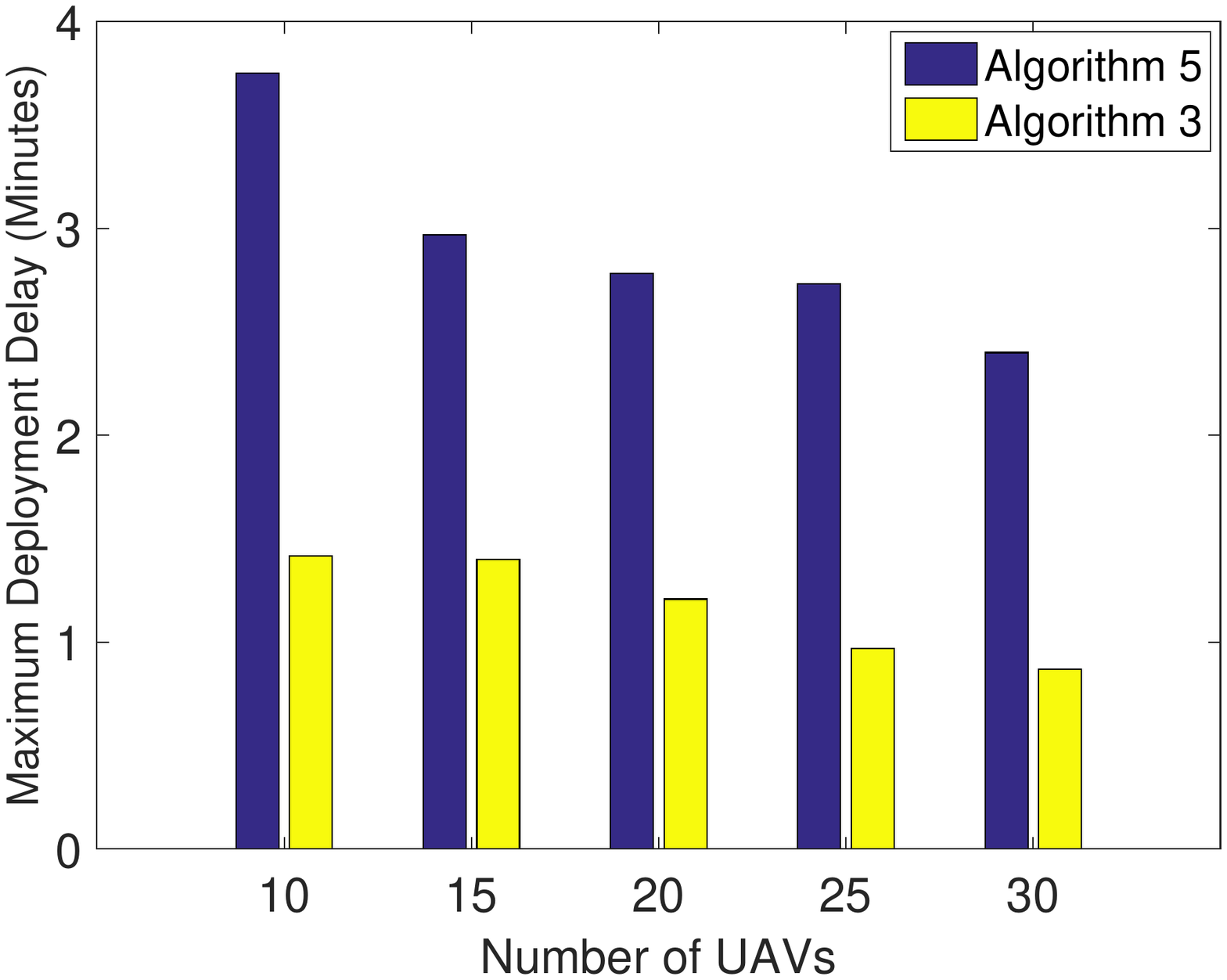}
    \caption{{Comparison between Algorithm~\ref{alg:fptas} and Algorithm~\ref{alg:dpalg} in terms of the maximum deployment delay.}}
    \label{fig:dif_mm30}
\end{figure}

Similarly, in Figure~\ref{fig:dif_mm30}, min-max problem~(\ref{IOP}) is solved by both Algorithm~\ref{alg:dpalg} ($n$-approximation according to Corollary~\ref{clr}) and Algorithm~\ref{alg:fptas} (providing $(1 + \epsilon)$-approximation, $\epsilon = 0.01\%$ here in the simulation). We show the performances of Algorithm~\ref{alg:fptas} and Algorithm~\ref{alg:dpalg} in terms of maximum deployment delay. It can be observed that the maximum deployment delay obtained by Algorithm~\ref{alg:fptas} is lower than Algorithm~\ref{alg:dpalg}. Figure~\ref{fig:dif_mm30} also tells that minimizing the total deployment delay can imply a significant increase in maximum deployment delay. However in fact, the empirical performance of Algorithm~\ref{alg:dpalg} not as bad as the worst-case theoretical upper bound indicated in Corollary~\ref{clr}.

\section{2D Extension for deployment algorithms} \label{sec:2D}

In this section, we discuss how to extend the proposed algorithms with theoretical guarantee to 2D. We consider to relax our model in two perspectives. One is to relax the initial locations of UAVs, the other is to relax the target area. Due to page limit, we only study the generalized min-max problem and the other generalized min-sum problem can be analyzed similarly.

\begin{figure}[t]
    \centering
        \includegraphics[width=0.75\textwidth]{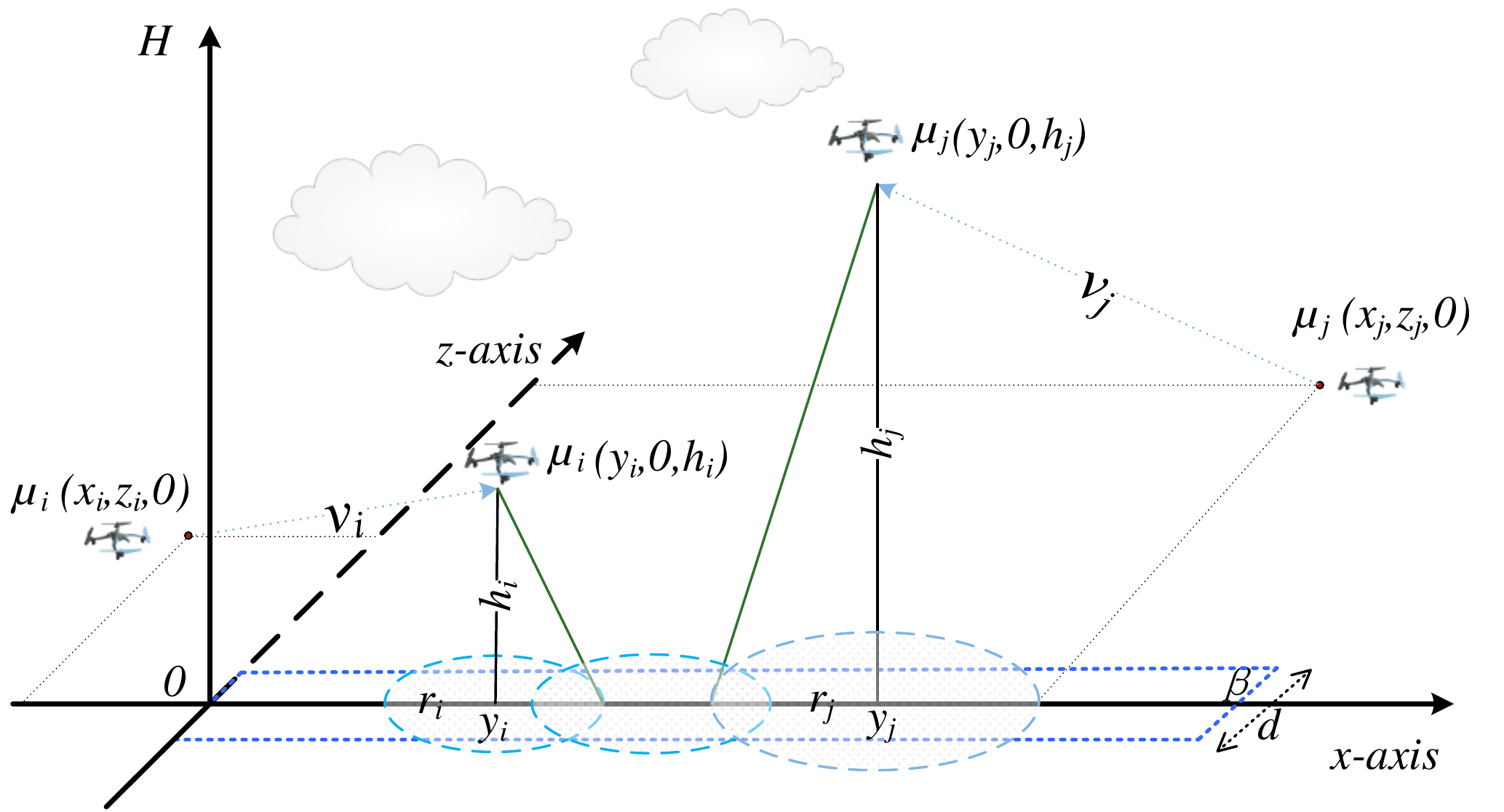}
    \caption{{Adding one dimension of the UAVs' initial positions, i.e., $z$-axis. $x$-domain and $z$-domain constitute the 2D ground space.}}
    \label{fig:covermodel1}
\end{figure}

\subsection{{UAVs are initially located in 2D area}} \label{sec:uav2D}

In current model, all UAVs are initially located on a line interval, i.e., $x$-axis. It can be extended to a 2D area by adding one more dimension of the UAVs' initial positions, i.e., $z$-axis. Specifically, for each UAV $\mu_i$, we use $z_i$ to denote the UAV $\mu_i$'s offset along $x$-axis. Thus, the initial location of the UAV is ($x_i, z_i, 0$), and it will be deployed to ($y_i, 0, h_i$) cover the target region, as shown in Figure \ref{fig:covermodel1}.
During the deployment, UAV $\mu_i$ travels an Euclidean distance $\sqrt{(y_i-x_i)^2 + z_i^2 + h_i^2}$ at flying speed $v_i$. Thus, its travel time is given by
{
\begin{equation}
\label{eq:time1}
{T}_i(y_i) = \frac{\sqrt{(y_i - x_i)^2 + z_i^2 + h_i^2}}{v_i}.
\end{equation}}

A particular UAV $\mu_i$ hovering at final position $(y_i, 0, h_i)$ covers a region $D_i$ (Equation (\ref{eq:area})) in the target area. It is required that all points of the target area \textbf{A} are covered after the deployment of UAVs, i.e., $\textbf{A} \subseteq \bigcup_1^n D_i$. We only need to change the travel time function ${T}_i(y_i)$ in Algorithm \ref{alg:common} and Algorithm \ref{alg:fptas} to be as in Equation (\ref{eq:time1}). Both algorithms can be applied directly when all UAVs are initially located in 2D area, and all theories still hold by similar proofs in Proposition \ref{common} and Theorem \ref{thrm:FPTAS}.


\subsection{{Both UAVs and target area are in 2D}} \label{sec:area2D}

In this section, we further relax our model to fast deploy diverse UAVs to provide full wireless coverage over a 2D rectangular ground plane $[0, \beta] \times [0, d]$.

Recall that both min-max and min-max problems are NP-complete when the target is line interval in Theorem \ref{thry:np1} and \ref{thry:np2}, thus the general problem of covering a 2D area is also NP-complete. Moreover, to reach a full coverage in the plane by using UAVs' disk-shaped coverage circles (no longer line segments in the 1D model) is general very difficult to solve \cite{stephenson2005introduction}, because the full coverage problem without any interstices is difficult to be solved by using UAVs' non-uniform coverage circles. Even if we have a solution for this static circle packing, we cannot use it for our fast UAV deployment problem because we also aim to minimize the travel time during the deployment. Despite the difficulties above, we manage to extend our prior algorithms to 2D area by applying proper approximations.

\subsubsection{{Uniform coverage radius}}
We first look at the case that all UAVs have the same coverage radius, then grid the rectangular area so that each grid square can be covered by a UAV \footnote{Since the coverage radius of UAVs is much smaller than the target area, we assume the area are decomposed into integral number of squares.}.

\begin{figure}[t]
    \centering
        \includegraphics[width=0.75\textwidth]{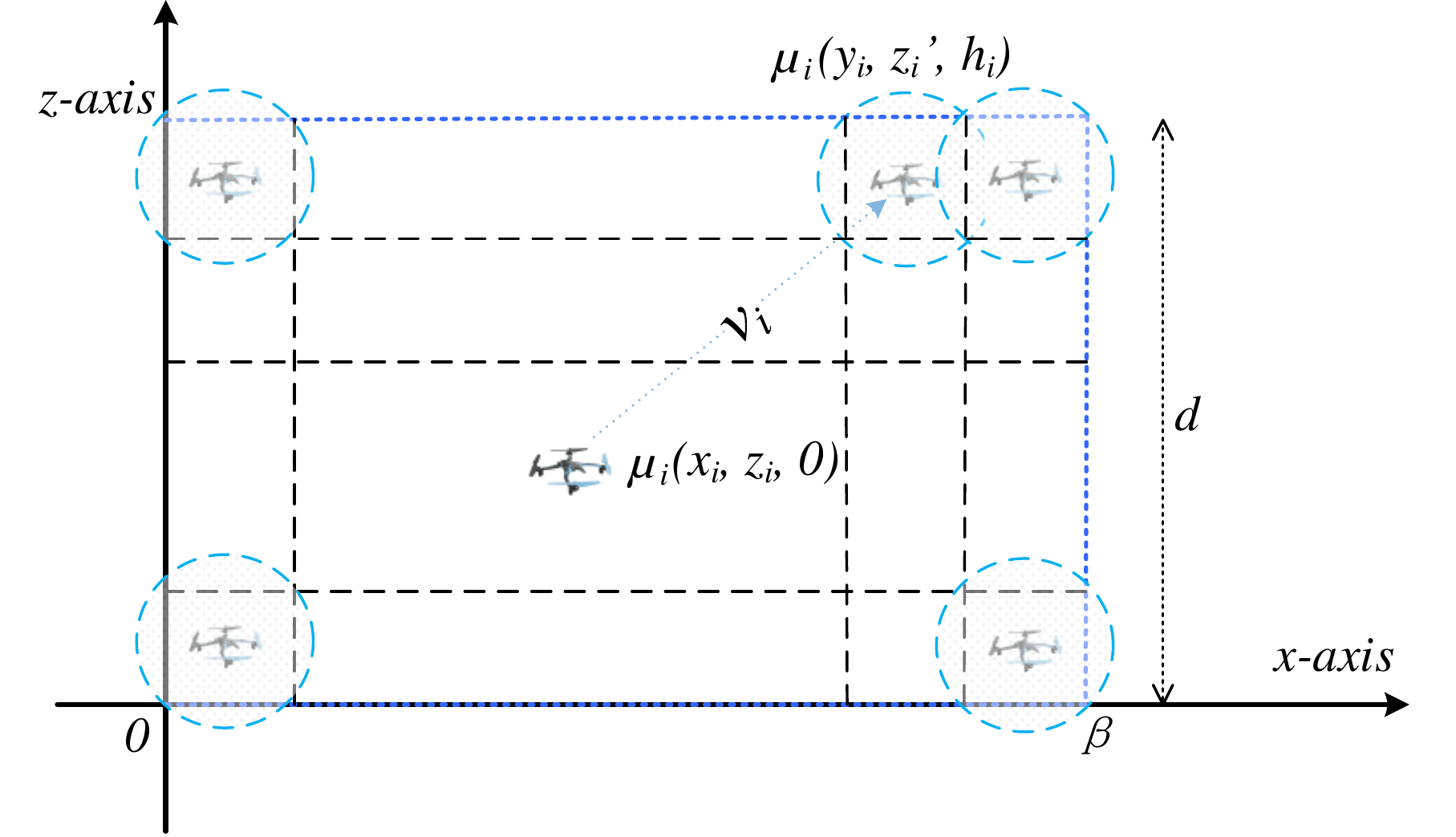}
    \caption{{Deploying UAVs to provide full wireless coverage over the rectangular area \textbf{A}, where the coverage radius of UAVs is much smaller than the target area.}}
    \label{fig:covermodel2}
\end{figure}

As shown in Figure~\ref{fig:covermodel2}, we want to fast deploy UAVs to provide full wireless coverage over the rectangular area \textbf{A} in equation (\ref{eq:A}). A particular UAV $\mu_i$ operating at final position $(y_i, h_i, z_i^{\prime})$ covers a region $A_i$ as equation (\ref{eq:Acover}) in the target rectangle. We require a full coverage over a rectangular area \textbf{A} by deploying $n$ diverse UAVs with identical coverage radius $r$.
{
\begin{equation}\label{eq:Acover}
\begin{split}
A_i = \{ (w,l)|&z_i^{\prime} -\frac{\sqrt{2}}{2} r_i \leq w \leq z_i^{\prime} + \frac{\sqrt{2}}{2} r_i,\\
&y_i -\frac{\sqrt{2}}{2} r_i \leq l \leq y_i + \frac{\sqrt{2}}{2} r_i \}.
\end{split}
\end{equation}}

We first study the min-max problem (\ref{general}) of covering a 2D area as in Section \ref{sec:same}, when UAVs are dispatched from the same location, i.e., ($x_i=0, z_i=0$) for all UAVs. Algorithm \ref{alg:common} can be applied directly. Since any UAV with larger distance from the initial location to the target position, it needs a larger travel time. Among all UAVs, we first consider which UAV to send and cover the furthest square of the target area. Specifically, given the current uncovered area, we sequentially select an unassigned UAV (e.g., $\mu_i$) with the minimum travel time to just cover the furthest square on the remaining uncovered area during deployment. We only need to compare each UAV's travel time to calculate the maximum delay objective.

We further study the min-max problem (\ref{IOP}) of covering a 2D area, when UAVs are dispatched from different locations. The new problem can be generalized from problem (\ref{IOP}):
{
\begin{align}
\label{IOP2D}
&~ \min_{\{(y_1,z_1^{\prime}),\cdots, (y_n,z_n^{\prime}\}} \underset{1 \leq i \leq n} {\max} ~ {T}_i(y_i, z_i^{\prime})~, \\
&~ {\rm s.t.}, \ \textbf{A} \subseteq \bigcup_1^n A_i,\notag \\
&~~~~~~~~~ \forall \ 1 \leq i \leq n-1, \  y_i \leq  y_{i+1}, \notag\\
&~~~~~~~~~ \forall \ 1 \leq i \leq n-1, \  z_i^{\prime} \leq  z_{i+1}^{\prime}. \notag
\end{align}}
{Note that the last two inequalities denotes the constraint of initial location order preserving along $x$-axis and $z$-axis for possible collision avoidance.}

{We can decompose the problem (\ref{IOP2D}) into $p$ subproblems, in which each subproblem $P_{i}$ are given a set of sequential UAVs $\Phi(t_i,\lambda_i) = \{ \mu_{t_i}, \mu_{t_i+1}, \cdots, \mu_{t_i + \lambda_i - 1}\}$, then UAVs in $\Phi(t_i,\lambda_i)$ are assigned to cover $q$ squares and they are with the same $x$-coordinate.}

\begin{algorithm}[t]
\caption{{Feasibility checking algorithm for 2D UAV deployment}}
\begin{algorithmic}[1]

\STATE \textbf{Input:}\\ 
$\textbf{U} =\{\mu_1, \mu_2, \ldots, \mu_n\}$ \\
$T$: a given deployment delay deadline for all UAVs

\STATE \textbf{Output:}\\ 
$y_{i}, z_i^{\prime}$: final locations of $\mu_{i}$

\STATE Compute $p = \left \lceil \frac{\beta}{\sqrt{2} r} \right \rceil$ and $q = \left \lceil \frac{d}{\sqrt{2} r}\right \rceil$ \\
\COMMENT{Calculate the number of UAVs needed to cover the 2D area.}
\STATE $t_{0}=1, \lambda_{0} = 0$
\FOR { $i = 1$ to $p$}
\FOR { $j = q$ to $n-t_{i-1}-\lambda_{i-1}-(p-i)q$ }
\STATE {Apply Algorithm \ref{alg:decalg} to subproblem $P(i)$ with $\Phi(t_{i-1}+\lambda_{i-1},j)$} \\
\COMMENT{The remaining UAVs are insufficient to cover the residual area.}
\IF {$T$ is feasible for $P(i)$}
	\STATE {$\lambda_i \gets j$, $t_i = t_{i-1}+\lambda_{i-1}$}
\ELSE
	\STATE {continue;}
\ENDIF
\IF {$j == n-t_{i-1}-\lambda_{i-1}-(p-i)q$}
    \RETURN $T \ is \ not feasible$ for problem (\ref{IOP2D})
\ENDIF
\ENDFOR
\ENDFOR
\RETURN $T \ is \ feasible$ for problem (\ref{IOP2D})\\

\end{algorithmic}
\label{alg:2D}
\end{algorithm}

{By combining Algorithm \ref{alg:2D} and binary search (similar to Algorithm \ref{alg:fptas}), we can obtain an FPTAS (\emph{2D deployment Algorithm}) to solve problem (\ref{IOP2D}).}

\begin{proposition} \label{prop:area2D}
{2D deployment Algorithm runs in $O(n^3 \log \frac{1}{\epsilon})$, which can arbitrarily approach the global optimum by assuming the UAVs' coverage radii are identical.}
\end{proposition}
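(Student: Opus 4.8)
The plan is to establish the two halves of Proposition~\ref{prop:area2D} separately: the $(1+\epsilon)$-approximation guarantee and the $O(n^3\log\frac1\epsilon)$ running time. Both reduce to analyzing the feasibility-checking subroutine (Algorithm~\ref{alg:2D}) and then wrapping it in the binary search already used in Theorem~\ref{thrm:FPTAS}. Indeed, once I show that Algorithm~\ref{alg:2D} correctly decides, for a given deadline $T$, whether problem~(\ref{IOP2D}) admits a full coverage of the rectangle within $T$, the argument of Theorem~\ref{thrm:FPTAS} carries over verbatim: binary search on $[T_l,T_u]$ with step $\epsilon T_l$ terminates between an infeasible $T'$ and a feasible $T''=T'+\epsilon T_l$, whence $T''\le T^*+\epsilon T_l\le(1+\epsilon)T^*$. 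So the crux is the correctness and cost of Algorithm~\ref{alg:2D}.

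For correctness, I would first exploit the uniform-radius assumption to justify the grid decomposition. Since every UAV covers the axis-aligned square $A_i$ of side $\sqrt2 r$ inscribed in its coverage disk (equation~(\ref{eq:Acover})), full coverage of $[0,\beta]\times[0,d]$ is attained exactly when every cell of the $p\times q$ grid, with $p=\lceil\beta/(\sqrt2 r)\rceil$ and $q=\lceil d/(\sqrt2 r)\rceil$, is covered. I would then argue that, under the two order-preserving constraints of~(\ref{IOP2D}), an optimal deployment can be taken to assign a contiguous block $\Phi(t_i,\lambda_i)$ of UAVs to each column $i$, all sharing that column's $l$-coordinate, so that covering column $i$ is precisely a one-dimensional $q$-cell coverage instance along the $w$-axis. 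The $T$-coverable endpoints for this 1D instance are derived from~(\ref{eq:a})--(\ref{eq:b}) after absorbing the fixed displacement to the column's $l$-coordinate into each UAV's effective vertical offset, so that Algorithm~\ref{alg:decalg} (Proposition~\ref{lm:decalg}) decides each column optimally.

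The remaining correctness step, which I expect to be the main obstacle, is the greedy optimality of processing columns left-to-right while assigning each column the fewest admissible UAVs (the inner scan $j=q,q+1,\dots$). I would prove this by an exchange argument: if some feasible global solution uses more than the greedy minimum $\lambda_1$ on the first column, the surplus consecutive UAVs can be reassigned to later columns without violating order preservation or any deadline, because all radii are identical and later columns lie further along the preserved order; iterating the swap transforms any feasible solution into the greedy one, so greedy succeeds whenever a feasible deployment exists. The reservation term $(p-i)q$ in the inner loop's upper limit guarantees that at least $q$ UAVs remain for each unprocessed column, which is exactly the invariant the exchange argument needs.

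Finally, for the running time I would bound the number of invocations of Algorithm~\ref{alg:decalg}. Column $i$ triggers at most $\lambda_i-q+1\le\lambda_i$ calls before its minimum admissible block size is found, and since the blocks $\Phi(t_i,\lambda_i)$ are disjoint we have $\sum_i\lambda_i\le n$; hence Algorithm~\ref{alg:decalg} is invoked $O(n)$ times in total across all $p$ columns. Each call runs on at most $n$ UAVs in $O(n^2)$ time by Proposition~\ref{lm:decalg}, giving $O(n^3)$ for one feasibility check. Multiplying by the $O(\log\frac1\epsilon)$ binary-search iterations yields the claimed $O(n^3\log\frac1\epsilon)$ bound.
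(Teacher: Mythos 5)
Your proposal is correct and follows essentially the same route as the paper: wrap the column-by-column feasibility check of Algorithm~\ref{alg:2D} (each column reduced to a 1D instance solved by Algorithm~\ref{alg:decalg}) in the binary search of Theorem~\ref{thrm:FPTAS} to inherit the $(1+\epsilon)$ guarantee, then bound the number of feasibility calls. In fact your write-up supplies details the paper's terse proof leaves implicit --- the exchange argument justifying the greedy minimal block per column, and the accounting $\sum_i \lambda_i \le n$ giving $O(n)$ invocations of Algorithm~\ref{alg:decalg}, which is a cleaner derivation of the $O(n^3)$ per-check cost than the paper's stated ``$p\cdot \frac{n}{q}$ iterations.''
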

\begin{proof}
Similar to Equation (\ref{eq:low}) and (\ref{eq:upp}), we can find the lower and upper bounds of the delay $T$. Then, we use binary search over those feasible deadlines to find the minimum deployment delay $T \leq (1 + \epsilon) T^*$ as in Section \ref{sec:bs}.
With respect to the time complexity, we can see that there are at most $p \cdot \frac{n}{q}$ iterations for the for loops, and Algorithm \ref{alg:decalg} runs in $O(n^2)$ time. Overall, Algorithm~\ref{alg:2D} runs in $O(n^3)$ time, which implies the obtained FPTAS runs in $O(n^3 \log \frac{1}{\epsilon})$.
\end{proof}

\begin{figure}[t]
    \centering
    \includegraphics[width=0.75\textwidth]{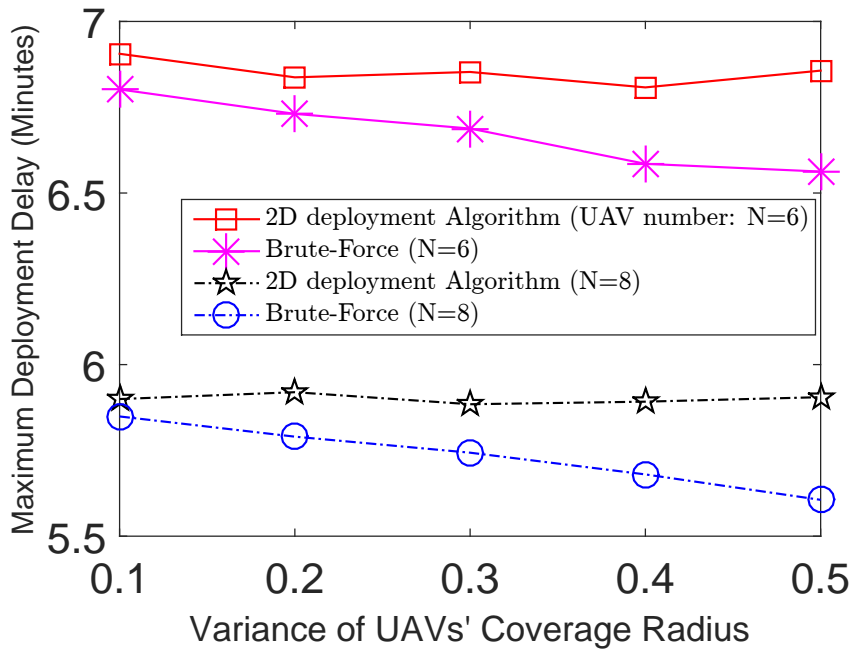}
    \caption{The deployment delays obtained by 2D deployment Algorithm and Brute-Force Algorithm for the min-max problem in 2D area.}
    \label{fig:2Dplot}
\end{figure}

\subsubsection{{Different coverage radius}}
{Now we look at the general case when UAVs have different coverage radii and extend 2D deployment Algorithm to solve it. As the general case is difficult to solve optimally, we view each UAV's coverage radius the same (equal to the minimum radius among all UAVs). Then we grid the rectangular target area according to the minimum coverage radius among UAVs. To show the effectiveness of 2D deployment Algorithm for this case, we compare it with the optimal solution obtained by Brute-Force algorithm. In this experiment, we set $\epsilon = 0.1\%$ as in Proposition \ref{prop:area2D}. The 2D area is set as a square with length of 4 km and width of 4 km. The average flying speed is 20 km/hour, and the minimum and mean coverage radius is set to be 1.5 km and 2 km to guarantee the full coverage. Figure~\ref{fig:2Dplot} shows the Maximum Deployment Delay under \emph{2D deployment Algorithm} versus the variance of UAVs' coverage radius and compares with the optimum obtained by brute-force. We can see that the performance gap between \emph{2D deployment Algorithm} and the brute-force algorithm is small especially when the variance of UAV's coverage radius is small. As \emph{2D deployment Algorithm} views all UAVs the same, this gap enlarges as the variance of UAVs' coverage radius increases. As we have more UAVs or larger $N$, the maximum delay reduces.} 

\section{Conclusion} \label{sec:dis}

The fast deployment of heterogeneous UAVs to provide wireless coverage is of great practical importance. To the best of our knowledge, this is the first work to deal with the emergency criteria of minimization of the maximum deployment delay and the total deployment delay among all UAVs till covering the whole target area. We prove that both min-max and min-sum problems are NP-complete in general. On one hand, when a number $n$ of diverse UAVs are dispatched from the same location, we present an optimal deployment algorithm of computational complexity $O(n^2)$ for the min-max problem. When UAVs are in general dispatched from different locations, by preserving UAVs' location order, we successfully design an FPTAS of computation complexity $O(n^2 \log \frac{1}{\epsilon})$. On the other hand, for the min-sum problem when UAVs are dispatched from the same location, we present an approximation algorithm runs in linear time. As for the general case, we further reformulate it as a dynamic program and propose a pseudo polynomial-time algorithm to solve it optimally. The theoretical results draw in this paper are further confirmed by simulation.

{The interference among UAVs' ground user services will be considered in future work, in which UAVs' coverage radius could be reduced accordingly and we need more UAVs to deploy for full coverage.}



\bibliographystyle{plainnat}

\clearpage

\appendices
\section{Proof of Theorem~\ref{thry:np1}} \label{ap:npc1}
\begin{proof}
We first define the decision version of min-max deployment delay problem in (\ref{general}) as follows: given an integer $K$ as the maximum deployment delay among UAVs, we want to determine whether UAVs can be moved to reach a full coverage within deadline $K$. We call it deployment feasibility problem, which will be proved to be NP-complete.

\begin{figure}[t]
\begin{center}
\includegraphics[scale=0.75]{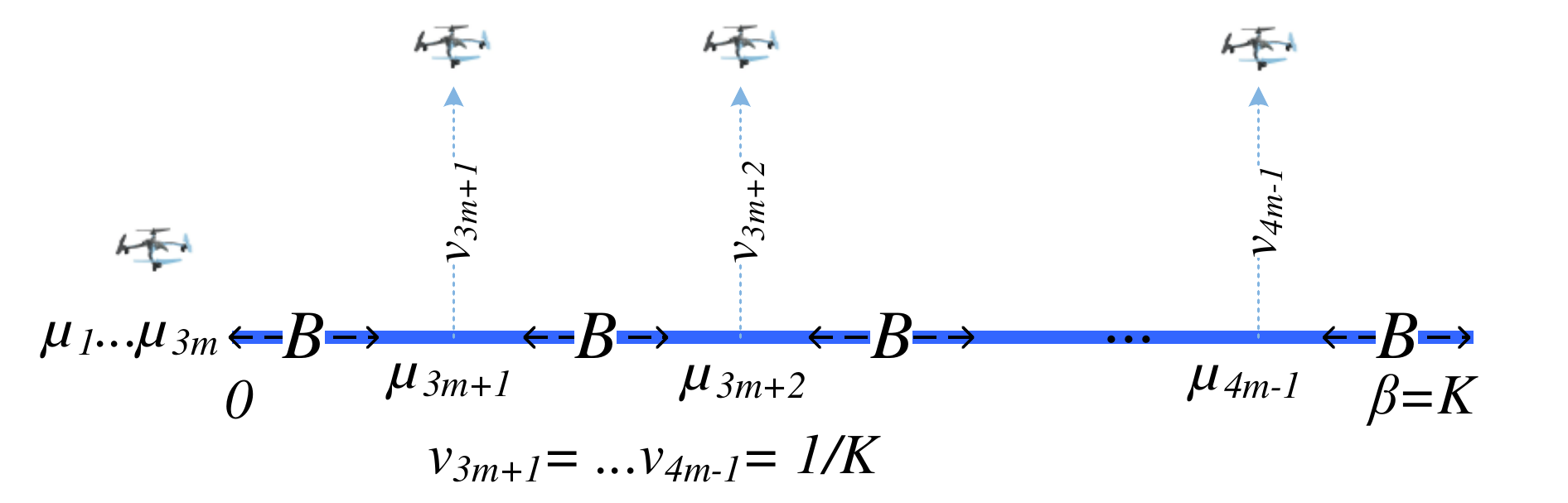}
\caption{Transformation of an arbitrary instance of 3-partition problem to an instance of the deployment feasibility problem.}
\label{fig:npc}
\end{center}
\end{figure}

We next reduce the {\em 3-partition problem} \cite{garey2002computers}, which is a well-known NP-complete problem, to the deployment feasibility problem. In the 3-partition problem, we are given a multiset $M = \{ a_1 \geq a_2 \geq \cdots \geq a_{3m}\}$ of $3m$ positive integers such that $\sum_{i=1}^{3m} a_i = mB$ for some $B$ and $B/4 < a_i < B/2$ for $1 \leq i \leq 3m$. The problem is to decide whether $M$ can be partitioned into $m$ triples $M_1, M_2, \ldots, M_m$ such that the sum of the three numbers in each triple is equal to $B$.

We next transform an arbitrary instance of the 3-partition problem to an instance of the deployment feasibility problem. Let $\beta = mB+m-1$, and $K = \beta$, where $\beta$ is the rightmost point of the target line interval $L = [0, \beta]$ by restricting $d \to 0$ as shown in Figure~\ref{fig:npc}. In the deployment feasibility problem, we have $n = 4m-1$ UAVs in total, which consists of two groups of UAVs. Specifically, for the first group where $1 \leq i \leq 3m$, we construct $3m$ UAVs $\mu_i$ with coverage radius $r_i= a_i/2$ and initial location $(\frac{-a_i}{2}, 0)$. In addition, for the second group, we construct $m-1$ UAVs $\mu_{3m+1}, \mu_{3m+2}, \ldots, \mu_{4m-1}$ of coverage radius $1/2$ and initial locations $(B+1/2, 0), (2B+3/2, 0),(3B+5/2, 0), \ldots, ((m-1)B+(2m-3)/2, 0)$. By construction, we can see that the sum of coverage radii of $4m-1$ UAVs equals to $\frac{1}{2} \beta$. That is to say, all constructed $4m-1$ UAVs should be moved to cover the target interval without overlapping. Otherwise, full coverage can be not achieved. As for UAV $\mu_{i}$, in which $i \in \{3m+1, 3m+2, \ldots, 4m-1\}$, let $h_i = 1$ and flying speed $v_i =\frac{1}{K}$. While for $1 \leq i \leq 3m$, $h_i = 0$ and $v_i = 1$.

Given the transformation above, we next prove that there exists a solution to the instance of the 3-partition problem {\em if and only if} the constructed instance of the deployment feasibility problem is feasible, i.e., UAVs can be moved to reach a full coverage within deadline $K$.
\begin{itemize}
  \item Given a solution to the instance of the 3-partition problem, i.e., there is a partition of $m$ triples $M_1, M_2, \ldots, M_m$, where the sum of each triple being $B$. We can move $m-1$ UAVs in the second group from initial locations $(B+1/2, 0), (2B+3/2, 0),(3B+5/2, 0), \ldots, ((m-1)B+(2m-3)/2, 0)$ to $(B+1/2, 1), (2B+3/2, 1),(3B+5/2, 1), \ldots, ((m-1)B+(2m-3)/2, 1)$ vertically. It can be observed that the deployment delay is $\frac{h_i}{v_i} = K$ since the vertical distance and flying speed is $1$ and $\frac{1}{K}$ respectively. Then, move three UAVs in the first group of each triple $M_i$ to cover each block with exactly length $B$ as shown in Figure~\ref{fig:npc}. Since each UAV $\mu_i$ in the first group are initially located at $(\frac{-a_i}{2}, 0)$ with coverage radius $r_i= a_i/2$, the longest horizontal distance it needs to move along $L$ is $\beta$. Since $\beta = K$, and for $1 \leq i \leq 3m$, $h_i = 0$ and $v_i = 1$, the deployment delay of the three UAVs corresponding to $M_i$ moving horizontally to cover the block of length $B$ is at most $K$. Therefore, the target interval is fully covered by UAVs with the maximum deployment delay $K$.
  \item Now, we have a feasible solution to instance of the deployment feasibility problem, where UAVs are moved to reach a full coverage within the deadline $K$ (i.e., it is feasible). Within the deadline $K$, we first observe that we can only move the UAVs in the second group vertically to hovering height $1$. Since if it moves with any horizontal distance $x$, the deployment delay will be $\sqrt{x^2 + 1} \cdot K$, which is larger than $K$. Thus, $[B,B+1] \cup [2B+1, 2B+2] \cup \ldots \cup [(m-1)B+m-1, mB+m-1]$ are covered by only moving each UAV in the second group vertically to hovering height $1$. Hence, there are $m$ uncovered blocks with exactly length $B$, i.e., $[0,B] \cup [B+2, 2B+2] \cup \ldots \cup [(m-1)B+2m-2, mB+2m-2]$. Thus, for the remaining $m$ uncovered blocks of length $B$, we can only move the UAVs in the first group to fill by partitioning $M$ into $m$ triples $M_1, M_2, \ldots, M_m$. We know that the sum of each triple is $B$ and $\sum_{i=1}^{3m} a_i = mB$, which implies that we have a solution to the instance of the 3-partition problem.
\end{itemize}

{By adding one additional UAV with radius $1$ and flying speed $\frac{1}{K+1}$ initially located at $(-\frac{3}{2}, 0)$, we can create an instance of the problem where $\sum_{i=1}^{4m} 2 r_i > \beta$, and the proof still holds since the additional UAV travels $K+1$ time units for distance $1$ and it can not be dispatched to cover the interval.}

We have proved that if there exists a solution to the instance of the 3-partition problem, the constructed instance of deployment feasibility problem has a solution within deadline $K$. Conversely, if the constructed instance of deployment feasibility problem has a feasible solution within deadline $K$, there exists a corresponding feasible solution to the instance of the 3-partition problem. The deployment feasibility problem is in NP, since we can verify whether any UAV deployment fully covers the target interval or not in polynomial time, which completes our proof.
\end{proof}

\section{Proof of Proposition~\ref{common}} \label{ap:common}
\begin{proof}
We first note that ${T}$ produced by Algorithm~\ref{alg:common} is feasible. Otherwise, Algorithm~\ref{alg:common} ends when $\overline{\beta} > 0$, which contradicts with our assumption $2 \sum_{i=1}^{n}{r_{i}} \geq \beta$. Thus, ${T}$ is a feasible solution. Since we are looking for minimum maximum travel time among all available UAVs to cover $[0, \beta]$, the larger the distance from the initial location to the target position, the more travel time needs. Thus, given all available UAVs, we first consider using one UAV to cover the furthest point of the target area, but not exceeding it. Otherwise, it can always move closer to reduce the travel time.

\begin{figure}[t]
    \centering
        \includegraphics[width=0.75\textwidth]{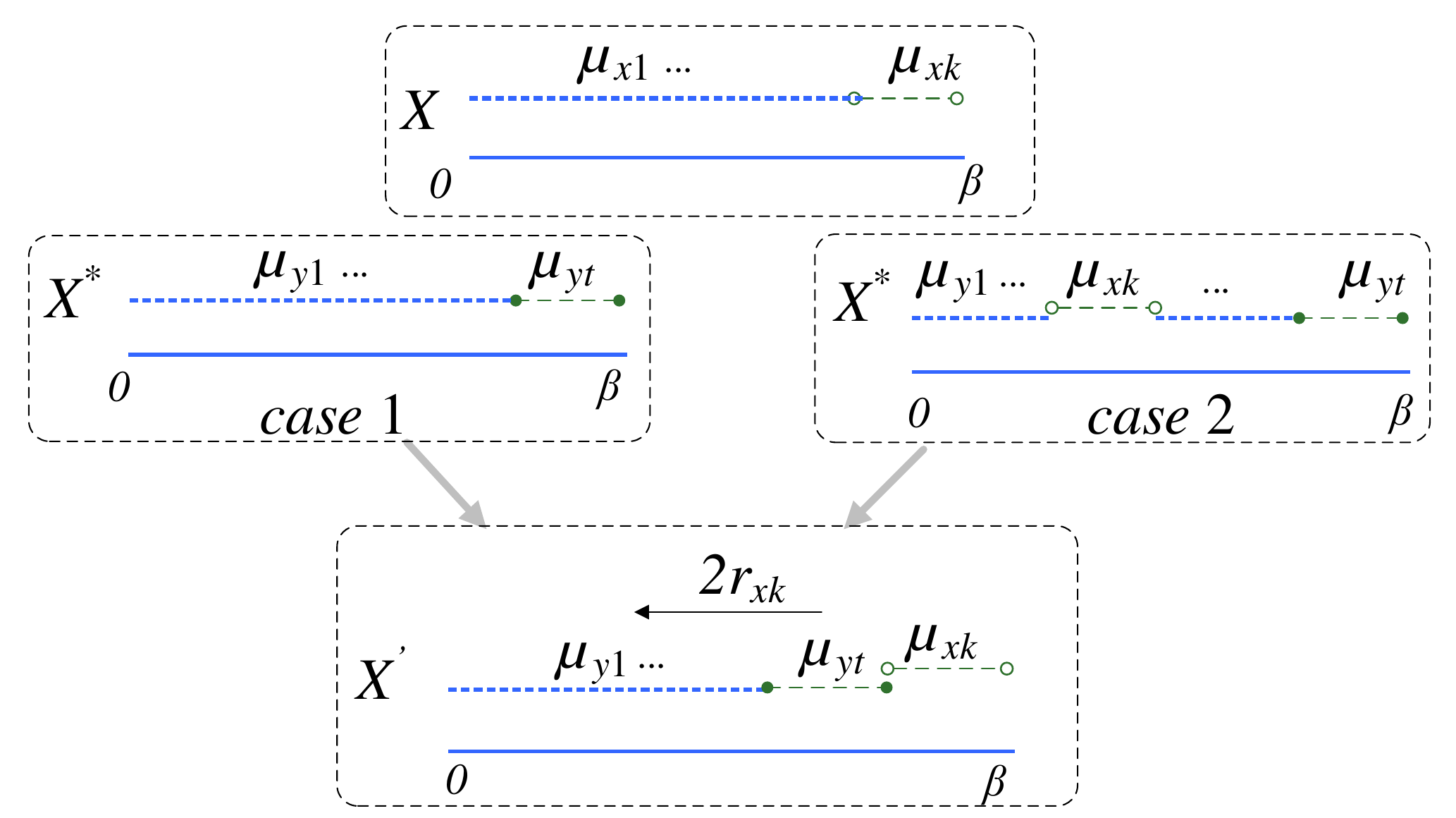}
    \caption{Optimality proof illustration of Algorithm~\ref{alg:common}.}
    \label{fig:greedyproof}
\end{figure}

As shown in Figure~\ref{fig:greedyproof}, suppose we have our solution $X$ with UAV sequence\footnote{Without loss of generality, assume that the UAVs are ordered from the closest point to the furthest point (left to right in our case).} $\{\mu_{x1}, \mu_{x2}, \cdots, \mu_{xk}\}$ and optimal solution $X^*$ with UAV sequence $\{ \mu_{y1}, \mu_{y2}, \cdots, \mu_{yt}\}$, where $k \leq n$ and $t \leq n$. They have the maximum delay of $T$ and $T^*$, respectively. Next, we will show that if $X \neq X^*$, we can always adjust the sequence of $X^*$ to be identical to $X$ without increasing the delay.

We first consider the furthest point of the target area, it is covered by $\mu_{yt}$ in $X^{*}$ while covered by $\mu_{xk}$ in $X$. If $\mu_{yt}$ and $\mu_{xk}$ are the same one, then nothing to prove. Otherwise, $T_{yt} > T_{xk}$ since Algorithm~\ref{alg:common} in Line 5 computes and selects the UAV $\mu_{xk}$ with the minimum travel time $T_{xk}$ among all available UAVs. Next, we try to adjust the order of $\mu_{yt}$ and $\mu_{xk}$ (if exists) in $X^*$ according to Algorithm~\ref{alg:common}. There will be two cases that $\mu_{xk}$ is in $\{ \mu_{y1}, \mu_{y2}, \cdots, \mu_{yt}\}$ or not, see Figure  ~\ref{fig:greedyproof}.

We first consider that $\mu_{xk}$ is not in $\{ \mu_{y1}, \mu_{y2}, \cdots, \mu_{yt}\}$ (case 1 in Figure~\ref{fig:greedyproof}), then we can just add $\mu_{xk}$ to cover the furthest point of the target area and push all the other UAVs to left $2r_{xk}$. Thus, we have the new UAV sequence $\{\mu_{y1}, \mu_{y2}, \cdots, \mu_{yt}, \mu_{xk}\}$. In this new sequence $X^{\prime}$, the maximum delay $T^*$ will not increase, since $T_{yt} > T_{xk}$ in $X^*$ and $\{\mu_{y1}, \mu_{y2}, \cdots, \mu_{yt}\}$ are moved to left by $2r_{xk}$. Next, for the other case that $\mu_{xk}$ is in $\{ \mu_{y1}, \cdots, \mu_{xk}, \cdots, \mu_{yt}\}$ (case 2 in Figure~\ref{fig:greedyproof}), we can still move $\mu_{xk}$ to the right of $\mu_{yt}$ and push all the other UAVs to left by $2r_{xk}$ without incurring more delay. Specifically, in $\{ \mu_{y1}, \cdots, \mu_{xk}, \cdots, \mu_{yt}\}$, if we move $\mu_{xk}$ to the right of $\mu_{yt}$ and push the other UAVs to left, the UAVs to the left of $\mu_{xk}$ will stand still (without incurring more delay) and the UAVs to the right of $\mu_{xk}$ will move to left by $2r_{xk}$ (incurring less delay). Therefore, in the first iteration, i.e., covering the furthest point of the target, we can always use the UAV with the minimum travel time among all available UAVs as given by Algorithm~\ref{alg:common}. Similarly, our algorithm produces the minimum travel time for each iteration with the available UAVs. Therefore, $T \leq T^*$, which shows that Algorithm~\ref{alg:common} is optimal.

During each UAV deployment (line 5 of Algorithm~\ref{alg:common}), we face at most $n$ UAVs in set $|\textbf{U}^-|$ and choose the best in each while loop, which starts from $n$ and decrease by one in each iteration. Thus, Algorithm~\ref{alg:common} runs in $O(n^2)$ time, which completes our proof.
\end{proof}

\section{Proof of Proposition~\ref{lm:decalg}} \label{ap:dec}
\begin{proof}
We first show that if the algorithm outputs that $T$ is feasible, then the computed solution is feasible. First, we notice that $y_{j} < y_{k}$ for each $\mu_{i} \in \textbf{U}^-$ if and only if $x_{j} < x_{k}$. Consider the $i$-th iteration. If $\overline{\beta} \notin [a_{i}, b_{i}]$, then $\mu_i$ is not selected. Otherwise, if $\overline{\beta}+r_i < b_i -r_i$, then $y_i = \overline{\beta} + r_i$. There is no such UAV $\mu_j$ such that $y_j > y_i$ and $j < i$, since $\overline{\beta} > y_j$ for all the UAV $\mu_j$ with $j<i$. In the other case when $\overline{\beta}+r_i > b_i -r_i$, then $y_i = b_i-r_i$. Suppose there exists some $j < i$ such that $y_j > y_i$, we know that $y_i + r_i \geq y_j+ r_j \Longrightarrow 0 < y_j - y_i \leq r_i -r_j$. Thus, we have $r_i -r_j > 0 > y_i -y_j \Longrightarrow y_j - r_j > y_i - r_i$. Thus, we can remove $\mu_j$ from $\textbf{U}^{-}$ without any loss to the current covered interval.

Then, let $\overline{\beta}_{i}$ denote the value of $\overline{\beta}$ after the $i$-th iteration. Initially, $\overline{\beta}_{0} = 0$. We prove by induction on $i$ that $[0, \overline{\beta}_{i}]$ is covered. Consider iteration $i$. If $\overline{\beta}_{i-1} \in [y_{i}-r_i, y_{i}+r_i]$, then $\overline{\beta}_{i} = y_{i} + r_i$.Next, we show that if the algorithm outputs that $T$ is not feasible, there is no feasible solution. We prove by induction on $[0, \overline{\beta}_{i}]$ is the longest interval that can be covered by UAVs $\mu_{1},\ldots,\mu_{i}$. In the base case, observe that $\overline{\beta}_{0} = 0$ is optimal. For the induction step, let ${C^{\prime}_{i}}$ be a sequence of UAVs $\mu_{1},\ldots,\mu_{i}$ that covers the interval $[0, \overline{\beta}^{\prime}_{i}]$. Let $[0, \overline{\beta}^{\prime}_{i-1}]$ be the interval that ${C^{\prime}_{i-1}}$ covers by $\mu_{1},\ldots,\mu_{i-1}$. By the inductive hypothesis, $\overline{\beta}^{\prime}_{i-1} \leq \overline{\beta}_{i-1}$. If $\overline{\beta}^{\prime}_{i-1} \leq \overline{\beta}_{i-1} < a_{i}$ or $v_i T < h_i$, it follows that $\overline{\beta}^{\prime}_{i} = \overline{\beta}^{\prime}_{i-1} \leq \overline{\beta}_{i-1} = \overline{\beta}_{i}$. Otherwise, we have $y_{i} = \min\{\overline{\beta}_{i-1} + r_{i}, b_{i} - r_i\}$ and $y^{\prime}_{i} = \min\{\overline{\beta}^{\prime}_{i-1} + r_{i}, b_{i} - r_i\}$. Observe that $y^{\prime}_{i} \leq y_{i}$ and therefore $\overline{\beta}^{\prime}_{i} \leq \overline{\beta}_{i}$.

Finally, with respect to the running time, $a_{i}$ and $b_{i}$ for $1 \leq i \leq n$ can be computed in $O(n)$ time. There are $n$ iterations for $n$ UAVs, each taking at most $O(n)$ time in line 10. Hence, the time complexity of Algorithm~\ref{alg:decalg} is $O(n^2)$.
\end{proof}

\section{Proof of Theorem~\ref{thry:np2}} \label{ap:npc2}
\begin{proof}
We define the decision version of total deployment delay minimization problem in (\ref{sum}) as follows: given an integer $K$, the problem is to determine whether UAVs can be moved to reach a full coverage such that the sum of all UAVs' deployment delay is at most $K$. We call it total delay feasibility problem. Then, we prove that total delay feasibility problem is NP-complete. We only describe the key points, since the reduction is similar to Theorem~\ref{thry:np1}.

\begin{figure}[t]
\includegraphics[scale=0.75]{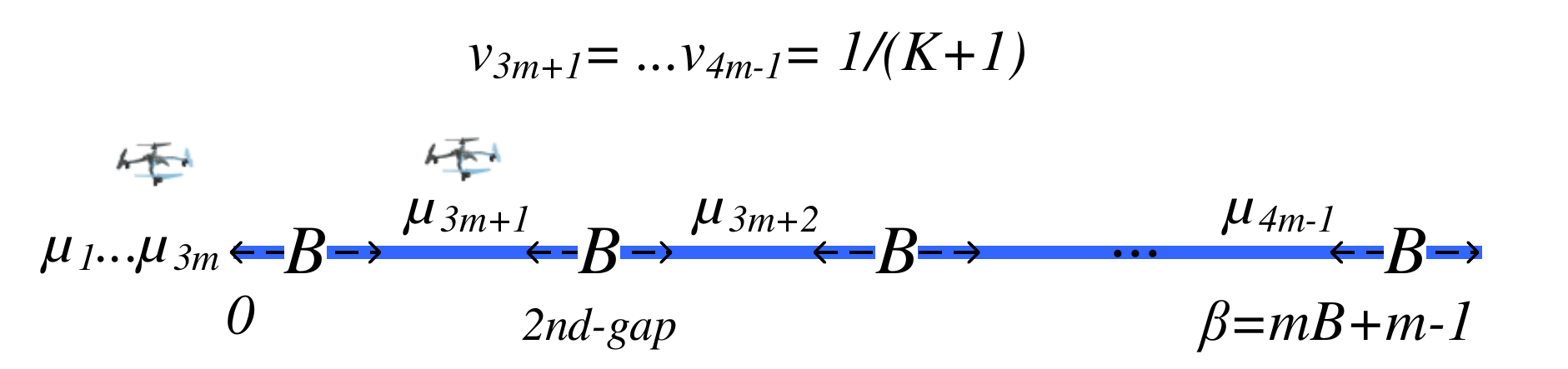}
\caption{Transformation of 3-partition problem to the total deployment delay problem.}
\label{fig:npc2}
\end{figure}

As shown in Figure~\ref{fig:npc2}, we transform an arbitrary instance of 3-partition problem to an instance of the total delay feasibility problem. The line interval is denoted as $L = [0, \beta]$, and let $\beta = mB+m-1$ and $K = 3Bm(m+1)+3(m-1)$. In the total delay feasibility problem, we have $n = 4m-1$ UAVs in total, which consists of two groups of UAVs. Specifically, for the first group where $1 \leq i \leq 3m$, we construct $3m$ UAVs $\mu_i$ with coverage radius $r_i = a_i/2$ and initial location $(\frac{-a_i}{2}, 0)$. In addition, for the second group, we construct $m-1$ UAVs $\mu_{3m+1}, \mu_{3m+2}, \ldots, \mu_{4m-1}$ of coverage radius $1/2$ and initial locations $(B+1/2, 0), (2B+3/2, 0),(3B+5/2, 0), \ldots, ((m-1)B+(2m-3)/2, 0)$. By construction, we can see that the sum of coverage radii of $4m-1$ UAVs equals to $\frac{1}{2} \beta$. That is to say, all constructed $4m-1$ UAVs should be moved to cover the target interval without overlapping. Otherwise, full coverage can be not achieved. As for UAV $\mu_{i}$, in which $i = \{3m+1, 3m+2, \ldots, 4m-1\}$, let $h_i = 0$ and $v_i =\frac{1}{K+1}$. While for $1 \leq i \leq 3m$, $h_i = 0$ and $v_i = 1$.

We now prove that there exists a solution to the instance of the 3-partition problem {\em if and only if} the constructed instance of total delay feasibility problem has a solution of at most $K$ deployment delay.

Given a solution to the instance of the 3-partition problem, i.e., there is a partition of $m$ triples $M_1, M_2, \ldots, M_m$, the sum of each triple being $B$. Because any UAV in second group moving at least distance $1$ will incur delay of $K+1$. We just keep the $m-1$ UAVs in the second group in the initial positions, which results in total deployment delay $0$. Then, without loss of generality, we assume that we move the triple $M_i$ with three UAVs to fill $i$-th gap of length $B$ from left to right as shown in Figure~\ref{fig:npc2}. For each $M_i$, the total travel time of the three UAVs moving into $i$-th gap is less than $3iB + 3i-3$. Thus, the sum of deployment delay of all UAVs is thus less than $\frac{3Bm(m+1)+3(m-1)}{2} = K/2$ in this case. Therefore, we have a solution of at most $K$ to the instance of the total delay feasibility problem. Conversely, it is also true that there is a feasible solution of at most $K$ to the total delay feasibility problem implies that we have a solution to the instance of the 3-partition problem.

{Similar to the proof of Theorem~\ref{thry:np1}, we can add one additional UAV with radius $1$ and flying speed $\frac{1}{K+1}$ initially located at $(-\frac{3}{2}, 0)$ to create an instance of the problem where $\sum_{i=1}^{4m} 2 r_i > \beta$.}
\end{proof}

\section{Proof of Lemma~\ref{fast}} \label{ap:fast}
\begin{proof}
{It can be proved by Greedy Exchange. Suppose that there is a feasible solution containing UAVs $\mu_a$ and $\mu_b$ when all UAVs are dispatched from $0$. As we know, all UAVs are with identical flying speed and operating altitude. As shown in Figure~\ref{fig:exchange}, there are two UAVs $\mu_a$ and $\mu_b$ with radii $r_a$ and $r_b$, in which $r_a > r_b$, and $\mu_a$ is located to the left of $\mu_b$. We can always swap the positions of $\mu_a$ and $\mu_b$ incurring less total delay. Moreover, the middle part (blue line) also shifts to left incurring less total delay. Overall, we can always deploy a UAV with longer wireless coverage radius to further location for saving the travel distance and delay.}
\end{proof}

\begin{figure}[t]
    \centering
        \includegraphics[width=0.75\textwidth]{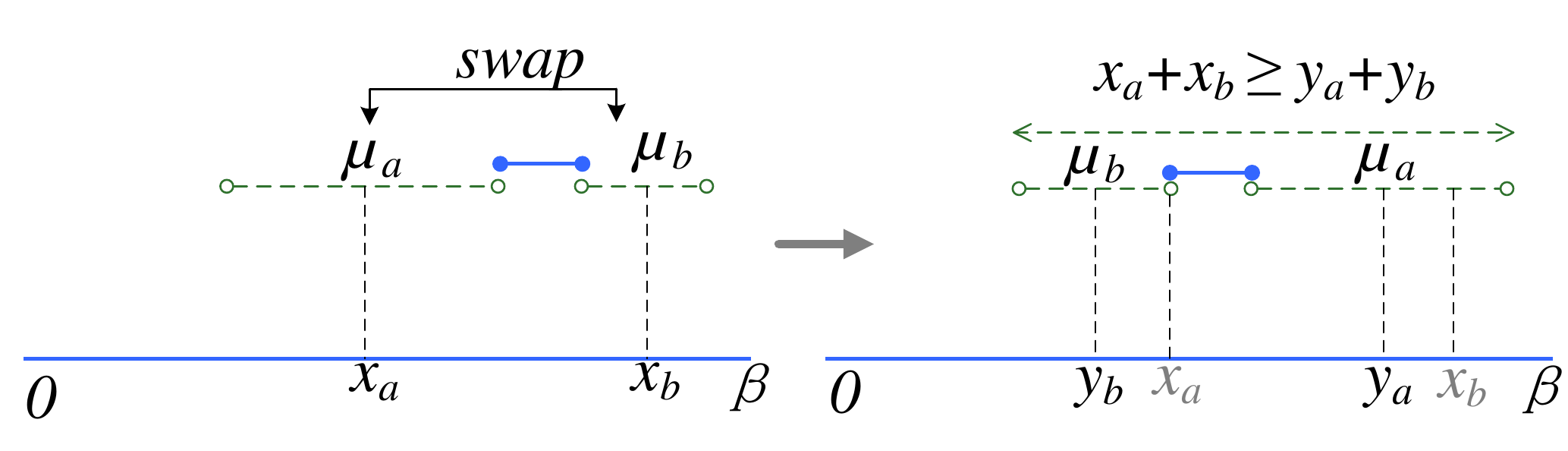}
    \caption{Optimality proof illustration of Lemma \ref{fast}}.
    \label{fig:exchange}
\end{figure}

\end{document}